\setlist[enumerate]{itemsep=1pt,topsep=3pt}
\setlist[itemize]{itemsep=1pt,topsep=3pt}
\setlist[description]{itemsep=1pt,topsep=3pt}
\newcommand{\coinsonaline}{\textsc{CoinsOnAShelf}\xspace}
\newcommand{\threepartition}{\textsc{3-Partition}\xspace}
\newcommand{\footpoint}[1]{\underaccent{\dot}{#1}}
\newcommand{\fA}{\footpoint{A}}
\newcommand{\fB}{\footpoint{B}}
\newcommand{\fC}{\footpoint{C}}
\newcommand{\fD}{\footpoint{D}}
\newcommand{\fG}{\footpoint{G}}
\newcommand{\fY}{\footpoint{Y}}
\newcommand{\fZ}{\footpoint{Z}}
\newcommand{\order}{\mathcal{D}}
\let\geq\geqslant
\let\leq\leqslant
\newtheorem{theorem}{Theorem}
\newtheorem{lemma}[theorem]{Lemma}
\renewcommand{\showkeyslabelformat}[1]{\normalfont\tiny\ttfamily#1}
\def\section{\@startsection {section}{1}{\z@}{-3.5ex plus -1ex minus
    -.2ex}{2.3ex plus .2ex}{\large\bf}}
\def\subsection{\@startsection{subsection}{2}{\z@}{-3.25ex plus -1ex
    minus -.2ex}{1.5ex plus .2ex}{\normalsize\bf}}
\def\@fnsymbol#1{\ensuremath{\ifcase#1\or *\or 1\or 2\or 3\or 4\or
    5\or 6\or 7 \or 8\ or 9 \or 10\or 11 \else\@ctrerr\fi}}
\title{Placing your Coins on a Shelf%
  \thanks{O.C.~is supported by NRF grant 2011-0030044 (SRC-GAIA) funded
    by the government of Korea.}} 
\author[1]{Helmut Alt}
\author[2]{Kevin Buchin}
\author[3]{Steven Chaplick}
\author[4]{Otfried Cheong}
\author[5]{Philipp Kindermann}
\author[6]{Christian Knauer}
\author[6]{Fabian Stehn}
\affil[1]{Freie Universität Berlin, Germany \texttt{alt@fu-berlin.de}}
\affil[2]{Technische Universiteit Eindhoven, Netherlands
  \texttt{k.a.buchin@tue.nl}}
\affil[3]{Universität Würzburg, Germany
  \texttt{steven.chaplick@uni-wuerzburg.de}}
\affil[4]{KAIST, Korea \texttt{otfried@kaist.airpost.net}}
\affil[5]{University of Waterloo, Canada
  \texttt{pkinderm@uwaterloo.ca}} 
\affil[6]{Universität Bayreuth, Germany
  \texttt{[christian.knauer|fabian.stehn]@uni-bayreuth.de}}
\begin{document}

\maketitle

\begin{abstract}
  We consider the problem of packing a family of disks ``on a shelf,''
  that is, such that each disk touches the $x$-axis from above and
  such that no two disks overlap. We study the problem of minimizing
  the distance between the leftmost point and the rightmost point of
  any disk in such a packing.  We show how to approximate this problem
  within a factor of $\nicefrac 43$ in~$O(n \log n)$ time. We further
  provide an $O(n \log n)$-time exact algorithm for a special case
  which includes inputs where the ratio between the largest radius and
  the smallest radius is less than four.  On the negative side, we
  prove that the problem is \NP-hard even when the ratio between the
  largest radius and the smallest radius is at most~36.
\end{abstract}

\section{Introduction}

Packing problems have a long history and abundant literature. Circular
disks and spherical balls, because of their symmetry and simplicity,
are of particular interest from a theoretical point of view.
Historically, Johannes Kepler conjectured that an optimal packing of
unit spheres into the Euclidean three-space cannot have greater
density than the face-centered cubic packing~\cite{kepler1611}.  The
conjecture was first proven to be correct by Hales and
Ferguson~\cite{hales06}. A more recent treatment of the proof is given
by Hales et al.~\cite{hales17}.  The proof of the 2-dimensional
version of Kepler's conjecture, that is, packing unit disks into the
Euclidean two-space, is elementary and attributed to Lagrange (1773).

Packing unit disks into 2-dimensional shapes in the plane is a well
studied problem in recreational mathematics. Croft et
al.~\cite{croft91} give an overview of packing geometrical objects in
finite-sized containers, for instance finding the smallest square
(circle, isosceles triangle, etc.) such that a given number of~$n$
unit disks can be packed into it.  Specht~\cite{specht} presents the
best known packings of up to $10,000$ disks into various containers.

Algorithmically, many packing problems are \NP-hard, some are not even
known to be in~\NP. Demaine, Fekete, and Lang showed that the problems
whether a given set of circular disks of arbitrary radii can be packed
into a given square, rectangle, or triangle are all \NP-hard
problems~\cite{abs-1008-1224}.

\begin{figure}[t]
  \centerline{\includegraphics[scale=0.8]{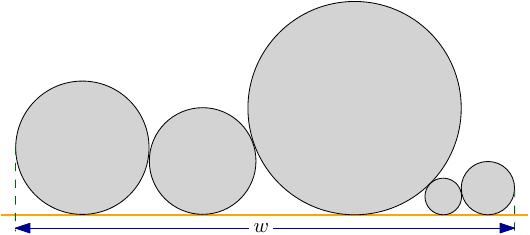}}
  \caption{Illustration of the span $w$ of a valid (but not optimal)
    placement of five discs.}
  \label{figure:illustrationWidth}
\end{figure}

We will discuss a particular ``nearly'' one-dimensional packing
problem for disks from an algorithmic perspective.  We are given a
family of disks that we wish to arrange ``on a shelf,'' that is, such
that each disk touches the $x$-axis from above and such that no two
disks overlap; see Figure~\ref{figure:illustrationWidth}.  The goal is
to minimize the \emph{span} of the resulting configuration, that is,
to minimize the horizontal distance between the leftmost and the
rightmost point of any disk. In other words, we want to minimize the
required width of the shelf. Obviously, this problem is trivial for
unit disks, so we allow the disks to have different sizes.

\subparagraph{Related work.}

D\"urr et al.~\cite{Duerr2017} independently study
shelf packings, but for the case when the objects are isosceles 
right-angle triangles (instead of disks). Namely, given $n$
sizes of this triangle, they ask for the shortest horizontal span in
which the triangles can be arranged so that their lowest point lies on
the $x$-axis, while the triangles do not overlap.  Their entirely
independent results are quite similar to ours: an \NP-hardness proof
by reduction from \threepartition, a fast algorithm for a special
case, and a $\nicefrac 32$-approximation algorithm.

Klemz et al.~\cite{klemz} show that it is \NP-hard to decide if $n$
given disks fit around a large center disk, such that each disk is in
contact with the center disk while all disks are disjoint.  Their
proof is by reduction from \threepartition as well.

Stoyan and Yaskov~\cite{stoyan2004} introduce the problem of
packing disks of unequal sizes into a strip of given height
and minimizing the required width which is known as the
\emph{circular open dimension problem}.

Miyazawa et al.~\cite{10.1007/978-3-662-44777-2_59} consider the problem of packing a set of circles into a minimum number of unit square bins. They give an asymptotic approximation scheme (APTAS) when resource augmentation in one dimension is allowed (i.e., they use bins of height slightly larger than one). They also obtain an APTAS for the circle strip packing problem, where the objective is to pack a set of circles into a strip of unit width and minimum height.

Lintzmayer et al.~\cite{10.1007/978-3-319-77404-6_54} present a polynomial-time approximation scheme for the Two-dimensional Knapsack for Circles problem, where one is given a set of circles and the goal is to pack a subset of them into a rectangular bin of fixed dimensions such that the sum of the area of the packed circles is maximum.

\subparagraph{Our results.} 

We first give some useful definitions and properties for touching
disks in Section~\ref{section:preliminaries}. 
The hardness of the problem arises from 
the fact that disks can sometimes ``hide'' in the holes formed
by larger disks, as in Figure~\ref{figure:generalCase}.  For this
reason, in Section~\ref{section:linearCase}, we consider the
special case where, for any ordering of the disks, each disk can touch
only its left and its right neighbor (where the two walls bounding the
span count as neighbors as well).  In particular, this implies that no
disk will ever fit in a gap between two other disks. We call this the
\emph{linear case}, see Figure~\ref{figure:linearCase}.
It turns out that for this (linear) case the optimal configuration
depends \emph{only} on the relative order of the disk
sizes,\footnote{The median disk for an odd number of disks is the only
  exception, it can be on either end, depending on its actual size.}
so it suffices to sort the disks in $O(n \log n)$ time to determine
the optimal sequence.

In Section~\ref{section:generalCase}, we show that in its general form, the
problem is \NP-hard.  More precisely, we show that given $n$ disk sizes
and a number $\delta > 0$, it is \NP-hard to decide if a
non-overlapping arrangement of the disks with horizontal span at
most~$\delta$ exists.  Our \NP-hardness proof is by a reduction from
\threepartition, and exploits the fact that disks can ``hide'' in the
holes formed by larger disks.

Finally, in Section~\ref{section:approximation}, we give an approximation
algorithm that runs in $O(n \log n)$~time and guarantees a span at
most $\nicefrac 43$ times the optimal span.

\begin{figure}[htb]
  \begin{subfigure}[t]{.48\linewidth}
    \centerline{\includegraphics[page=1]{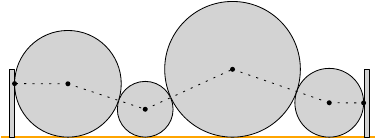}}
    \caption{The linear case.}
    \label{figure:linearCase}
  \end{subfigure}
  \hfill
  \begin{subfigure}[t]{.48\textwidth}
    \centerline{\includegraphics[page=2]{caseDistinction}}
    \caption{Small disks can ``hide'' between larger disks.}
    \label{figure:generalCase}
  \end{subfigure}
  \caption{Illustration of different instances of the problem.}
\end{figure}

\section{Preliminaries}\label{section:preliminaries}

For reasons that will become obvious shortly, it will be convenient to
define the \emph{size} of a disk as the \emph{square root} of its
radius.  We will denote disks by capital letters, and their size by
the corresponding lower-case letter. Namely, disk~$A$ has size~$a$,
radius~$a^{2}$, and diameter~$2a^{2}$. 

In a valid placement, each disk~$A$ touches the $x$-axis in its lowest
point.  We will call this point the \emph{footpoint} of the disk and
denote it~$\fA$.  All of our arguments are based on calculations
involving the distances between footpoints, so we start with the
following lemma.
\begin{lemma}
  \label{lemma:footpoint}
  If $A$ and $B$ touch, then their footpoint distance $\fA\fB$ is
  $2ab$.
\end{lemma}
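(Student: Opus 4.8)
The plan is to reduce everything to the Pythagorean theorem. First I would set up coordinates: since $A$ touches the $x$-axis from above at its footpoint $\fA$, its center lies directly above $\fA$ at height equal to its radius $a^{2}$; similarly the center of $B$ lies directly above $\fB$ at height $b^{2}$. Write $\fA = (p,0)$ and $\fB = (q,0)$, so the centers are $(p, a^{2})$ and $(q, b^{2})$.

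Next I would use the contact condition. Since $A$ and $B$ touch and neither contains the other (they lie in a valid, non-overlapping placement), the contact is external, so the distance between the two centers equals the sum of the radii, $a^{2}+b^{2}$. Writing this out and squaring gives
\[
  (p-q)^{2} + (a^{2}-b^{2})^{2} \;=\; (a^{2}+b^{2})^{2}.
\]
Then I would apply the difference-of-squares identity on the right-hand side minus the second term on the left, obtaining $(p-q)^{2} = (a^{2}+b^{2})^{2} - (a^{2}-b^{2})^{2} = 4a^{2}b^{2}$, and finally take the nonnegative square root to conclude $\fA\fB = |p-q| = 2ab$.

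There is no real obstacle here; the only point that deserves a word is justifying that the tangency is external rather than internal, which follows immediately from the definition of a valid placement (disks do not overlap, so one cannot be nested inside another). Everything else is a one-line computation.
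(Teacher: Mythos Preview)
Your proof is correct and follows essentially the same approach as the paper: both reduce to the Pythagorean theorem applied to the right triangle with hypotenuse $a^{2}+b^{2}$ (the center-to-center distance) and vertical leg $|a^{2}-b^{2}|$, yielding $(\fA\fB)^{2}=4a^{2}b^{2}$. Your version is slightly more explicit about coordinates and about why the tangency is external, but the argument is the same.
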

\begin{proof}
  The statement holds for $a = b$, so we assume $a > b$ and consider
  the right-angled triangle with edge lengths $\fA\fB$, $a^{2} + b^{2}$, and
  $a^{2}-b^{2}$, see Figure~\ref{figure:introductionSize}.  We obtain
  $(\fA\fB)^{2} = (a^{2} + b^{2})^2 - (a^{2} - b^{2})^2 =
  4a^{2}b^{2}$.
\end{proof}
\begin{figure}[hbt]
  \centerline{\includegraphics{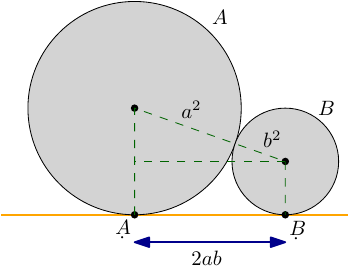}}
  \caption{The footpoint distance of two touching disks.}
  \label{figure:introductionSize}
\end{figure}

\begin{lemma}
  \label{lemma:gap}
  Let $G$ be the largest disk that fits in the gap formed by two
  touching disks~$A$ and~$B$. Then~$\nicefrac 1g = \nicefrac 1a + \nicefrac 1b$.
\end{lemma}
\begin{proof}
  Since $G$ is the largest disk that fits in the gap, it must touch
  both~$A$ and~$B$. By Lemma~\ref{lemma:footpoint} we have $2ab =
  \fA\fB = \fA\fG + \fG\fB = 2ag + 2gb$, proving the lemma.
\end{proof}

\begin{lemma}
  \label{lemma:wallgap}
  Let $G$ be the largest disk that fits in the gap between a disk~$A$
  and the vertical wall through~$A$'s rightmost point.
  Then $g = (\sqrt 2 -1) \cdot a$.
\end{lemma}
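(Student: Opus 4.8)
The plan is to mimic the proof of Lemma~\ref{lemma:gap}, treating the vertical wall as the analogue of a second touching disk. First I would fix coordinates by placing the footpoint $\fA$ at the origin of the $x$-axis; then $A$'s rightmost point lies at $x = a^{2}$, so the wall is the vertical line $x = a^{2}$. Now consider any disk $G$ of size $g$ sitting on the $x$-axis to the right of $A$ inside the gap: ``fitting in the gap'' means precisely that (i) $G$ does not overlap $A$ and (ii) $G$ lies weakly to the left of the wall.

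Next I would translate both conditions into inequalities on the position $x(\fG)$ of $G$'s footpoint. By Lemma~\ref{lemma:footpoint}, $G$ and $A$ are disjoint or touching exactly when $x(\fG) = \fA\fG \geq 2ag$. Since $G$'s rightmost point is at $x(\fG) + g^{2}$, condition~(ii) reads $x(\fG) + g^{2} \leq a^{2}$, i.e.\ $x(\fG) \leq a^{2} - g^{2}$. Hence a disk of size $g$ fits in the gap if and only if $2ag \leq a^{2} - g^{2}$.

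For the largest such disk we get equality, so $g$ is the positive root of $g^{2} + 2ag - a^{2} = 0$. Solving this quadratic yields $g = -a + a\sqrt{2} = (\sqrt{2}-1)\,a$, as claimed; note that $\sqrt{2}-1 < 1$, consistent with $G$ being the smaller disk wedged into the corner.

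I do not expect a genuine obstacle here; the only point needing a word of care is the direction of the argument — that for the maximal $g$ the two extremal constraints (not overlapping $A$, not crossing the wall) are simultaneously tight — which follows exactly as in Lemma~\ref{lemma:gap}, since the left-hand side $2ag$ is increasing and the right-hand side $a^{2}-g^{2}$ is decreasing in $g$, so their graphs cross at a single value.
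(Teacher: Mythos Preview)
Your argument is correct and essentially identical to the paper's: both reduce the fitting condition to $\fA\fG + g^{2} \leq a^{2}$, substitute $\fA\fG = 2ag$ from Lemma~\ref{lemma:footpoint}, and take equality for the maximal~$g$ to obtain the quadratic $g^{2}+2ag-a^{2}=0$. You are simply more explicit---setting up coordinates, separating the two constraints, and actually solving the quadratic---whereas the paper compresses this into two sentences and leaves the root extraction to the reader.
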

\begin{proof}
  Again, $G$ must touch both~$A$ and the wall, so we have $a^{2} =
  \fA\fG + g^{2} = 2ag + g^{2}$.  The positive solution to
  $g^{2} + 2ag - a^{2} = 0$ is $(\sqrt{2} - 1)\cdot a$.
\end{proof}

In any valid placement of the disks, their footpoints are distinct.
Thus, the footpoints induce a linear left-to-right order on the
disks.
We refer to this linear order as the \emph{footpoint sequence} of a
valid placement.
Further, disks are called \emph{consecutive} or \emph{neighbors}
when their footpoints are consecutive in the footpoint sequence.

\section{The Linear Case}
\label{section:linearCase}

In this section, we consider \emph{linear case instances}, that is,
instances where in any valid placement only consecutive pairs of disks
can touch, only the first disk (with the leftmost footpoint) touches
the left wall, and only the last disk touches the right wall.

By Lemmas~\ref{lemma:gap} and~\ref{lemma:wallgap}, this is true if and
only if the following condition holds: Let~$A$ be the largest disk,
$B$~the second largest, and~$Z$ the smallest disk in the
collection. Then $\nicefrac 1z < \nicefrac 1a + \nicefrac 1b$, and $z > (\sqrt{2} -1)\cdot a$.  The
condition holds in particular if the ratio between the largest and
smallest disk size is less than two (that is, if the ratio of
diameters is less than four), since then we have $\nicefrac 1z < \nicefrac 2a \leq \nicefrac 1a +
\nicefrac 1b$ and $z> \nicefrac a2 > (\sqrt 2 -1)\cdot a$.

In an optimal placement of a linear case instance, each disk must
touch both its neighbors.  Thus, the ordering of the disks uniquely
determines the exact placement of every disk in any layout of minimal
span.
From now on, we represent placements by the \emph{ordering} of the
disks, with the understanding that the placement minimizes the span
for this ordering. It remains to determine the optimal ordering.  We
will first give a lemma that allows us to improve a given ordering.
\begin{lemma}
  \label{lemma:reverse}
  Let $\order$ be a left-to-right or right-to-left ordering of the
  disks in a linear case instance.  Let $A$, $B$, $Z$ be three disks
  that appear in this order in~$\order$ such that $AB$ is a
  consecutive pair.  Let $\order'$ be the ordering obtained
  from~$\order$ by reversing the subsequence from~$B$ to~$Z$.  Then
  $\order'$ has smaller span than~$\order$ if one of the following is
  true:
  \begin{enumerate}[label=(C\arabic*),leftmargin=4em]
  \item\label{enum:reverse-1}
    $Z$ is the last disk and $a > b > z$;
  \item\label{enum:reverse-2}
    $Z$ is the last disk and $a < b < z$;
  \item\label{enum:reverse-3} $a > y$ and $b > z$, where $Y$ is the
    disk after~$Z$ in~$\order$;
  \item\label{enum:reverse-4} $a < y$ and $b < z$, where $Y$ is the
    disk after~$Z$ in~$\order$.
  \end{enumerate}
\end{lemma}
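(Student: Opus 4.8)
The plan is to compute the span of $\order$ and of $\order'$ as explicit sums of footpoint distances and compare them term by term. Since we are in a linear case instance, the span of any ordering is determined: by Lemmas~\ref{lemma:footpoint} and~\ref{lemma:wallgap}, if the ordering is $D_1, D_2, \dots, D_n$ (with sizes $d_1,\dots,d_n$), the span equals
\[
  (\sqrt2-1)d_1 \;+\; \sum_{i=1}^{n-1} 2 d_i d_{i+1} \;+\; (\sqrt2-1)d_n,
\]
where the first and last terms come from the gaps between the extreme disks and the two walls, and each middle term $2d_id_{i+1}$ is the footpoint distance of a consecutive touching pair (Lemma~\ref{lemma:footpoint}). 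Reversing the subsequence from $B$ to $Z$ leaves untouched every consecutive pair strictly before $A$ and (in cases \ref{enum:reverse-3},\ref{enum:reverse-4}) every pair strictly after $Z$; it also leaves the multiset of "internal" pairs inside the reversed block unchanged, since reversing a sequence preserves the set of adjacent pairs. So the span difference $\mathrm{span}(\order)-\mathrm{span}(\order')$ collapses to a short expression involving only the pairs at the two ends of the reversed block.

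Concretely, in cases \ref{enum:reverse-3} and \ref{enum:reverse-4}, before the reversal the block is bordered by the pairs $AB$ (on the left) and $ZY$ (on the right), contributing $2ab + 2zy$; after the reversal the block is $A, Z, \dots, B, Y$, so these borders become $AZ$ and $BY$, contributing $2az + 2by$. Hence
\[
  \mathrm{span}(\order) - \mathrm{span}(\order') \;=\; 2ab + 2zy - 2az - 2by \;=\; 2(a-y)(b-z).
\]
This is positive precisely when $a>y, b>z$ (case \ref{enum:reverse-3}) or $a<y, b<z$ (case \ref{enum:reverse-4}), which is exactly the claim. In cases \ref{enum:reverse-1} and \ref{enum:reverse-2}, $Z$ is the last disk, so there is no $Y$; instead $Z$ borders a wall, contributing $(\sqrt2-1)z$ before and after the reversal alike, while the left border of the block changes from $AB$ to $AZ$. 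The difference is then $2ab - 2az = 2a(b-z)$, which — wait, that is not yet the full story, because after reversal the disk $B$ becomes the last disk and borders the wall, so the wall term changes from $(\sqrt2-1)z$ to $(\sqrt2-1)b$. Thus
\[
  \mathrm{span}(\order) - \mathrm{span}(\order')
  = 2ab + (\sqrt2-1)z - 2az - (\sqrt2-1)b
  = 2a(b-z) - (\sqrt2-1)(b-z)
  = (b-z)\bigl(2a-(\sqrt2-1)\bigr).
\]
Hmm, the sign here depends on $2a$ versus $\sqrt2-1$ as well; I would need to check whether $2a > \sqrt2-1$ always holds (it does not follow from the stated hypotheses alone unless one normalizes, so this is where I expect to have to be careful — possibly the intended comparison also uses $a>b$ in \ref{enum:reverse-1} to bound things, giving a cleaner factorization like $(a-z)(\dots)+(a-b)(\dots)$). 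The main obstacle is precisely getting the wall-term bookkeeping right in the endpoint cases \ref{enum:reverse-1} and \ref{enum:reverse-2}: one must track that both the left border of the reversed block and the right wall-contact change, and then verify that the hypotheses $a>b>z$ (resp. $a<b<z$) make the resulting expression strictly positive. Once the difference is written as a product of factors each controlled by one of the stated inequalities, the four cases follow immediately.
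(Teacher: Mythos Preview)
Your overall strategy---compare the two spans by tracking only the adjacencies that change at the boundary of the reversed block---is exactly the paper's approach, and your treatment of cases~\ref{enum:reverse-3} and~\ref{enum:reverse-4} is correct and matches the paper.

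The gap is in cases~\ref{enum:reverse-1} and~\ref{enum:reverse-2}: your formula for the span is wrong at the endpoints. Lemma~\ref{lemma:wallgap} is irrelevant here---it describes the largest disk that fits in a corner, not the contribution of the extreme disk to the span. The span runs from the leftmost point of the first disk to the rightmost point of the last disk, so the last disk~$Z$ contributes its \emph{radius}~$z^{2}$ (recall size $=$ square root of radius), not~$(\sqrt 2 - 1)z$. With the correct endpoint term the difference becomes
\[
  \mathrm{span}(\order)-\mathrm{span}(\order')
  \;=\; 2ab + z^{2} - 2az - b^{2}
  \;=\; (b-z)(2a-b-z),
\]
which is exactly the factorization the paper obtains. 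Now both hypotheses in~\ref{enum:reverse-1} are used: $b>z$ makes the first factor positive, and $a>b>z$ gives $2a>b+z$, making the second factor positive; case~\ref{enum:reverse-2} flips both signs. This resolves the puzzle you noticed (``the sign depends on $2a$ versus $\sqrt 2 -1$''): that anomaly was an artifact of the wrong endpoint term, not a missing step in the argument.
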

\begin{proof}
  First, suppose that $Z$ is the last disk in~$\order$. Then, except
  for $\fA\fB$ being replaced by $\fA\fZ$, each consecutive footpoint
  distance in $\order'$ is the same as in~$\order$. So, since the last
  disk in~$\order'$ is~$B$, the change in span is $\fA\fZ + b^{2} -
  \fA\fB - z^{2} = 2az + b^{2} - 2ab - z^{2} = (b + z - 2a)(b - z)$.
  For both $a < b < z$ and $a > b > z$, this is negative, and so
  $\order'$ has smaller span than~$\order$.

  Now suppose $Z$ is not the last disk, and let $Y$ be the disk
  after~$Z$. Here, except for $\fA\fB$ being replaced by $\fA\fZ$ and
  $\fZ\fY$ being replaced by $\fB\fY$, each consecutive footpoint
  distance in $\order'$ is the same as in~$\order$. Thus, the change
  in span is $\fA\fZ + \fB\fY - \fA\fB - \fZ\fY = 2(az + by - ab - zy)
  = 2(a-y)(z-b)$.  For $a > y$ and $b > z$ or $a < y$ and $b < z$,
  this is negative.  So, again $\order'$ has smaller span
  than~$\order$.
\end{proof}

We label a given family of $n$ disks in order of
decreasing size as $D_1, D_2, D_3, \dots, D_n$, and in order of
increasing size as $S_1, S_2, S_3, \dots, S_n$.  In other words, $d_1
\geq d_2 \geq d_3 \geq \dots \geq d_n$ and $s_1 \leq s_2 \leq s_3 \leq
\dots \leq s_n$.  Thus, each disk has two names, and we have $D_1 = S_n$,
$D_2 = S_{n-1}$, and so on until $D_n = S_1$.

We now prove our claim about the structure of the optimal
ordering (see also Figure~\ref{figure:lemma-linear}):
\begin{lemma}
  \label{lemma:ordering}
  Let $k$ be an integer with $1 \leq k \leq \nicefrac n2$.  In any
  optimal placement of $n$ disks with distinct sizes in a linear case
  instance, there is a consecutive subsequence of~$2k$ disks that
  consists of the~$k$ largest disks $D_1,\dots,D_k$ and the~$k$
  smallest disks $S_1,\dots,S_k$, and that is terminated by the
  disks~$S_k$ and~$D_k$.  If $k > 1$, then $D_kS_{k-1}$ and
  $S_{k}D_{k-1}$ are consecutive pairs.
\end{lemma}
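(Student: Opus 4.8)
The plan is to prove the lemma by induction on $k$, using Lemma~\ref{lemma:reverse} as the only engine: in any optimal placement there is no consecutive pair $AB$ followed by a disk $Z$ meeting one of the conditions (C1)--(C4), since the associated reversal would strictly shrink the span. I fix one optimal placement and argue about its footpoint sequence; since Lemma~\ref{lemma:reverse} is stated for both left-to-right and right-to-left readings, I flip the orientation whenever convenient. (Recall that in a linear case instance the footpoint sequence already determines the placement, so improving the order is the same as improving the span.)

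\emph{Base case $k=1$.} Here the claim just says that the largest disk $D_1$ and the smallest disk $S_1$ are neighbours. Suppose not; orient so that $S_1$ precedes $D_1$, and let $A$ be the left neighbour of $D_1$. Since $D_1$ and $S_1$ are not adjacent, $A$ lies strictly between them, so $A\ne S_1$ and $A\ne D_1$. Reversing the block of disks running from $S_1$ up to $A$ puts $S_1$ immediately left of $D_1$ and strictly decreases the span: if $S_1$ is not the leftmost disk and $X$ is the disk just left of it, Lemma~\ref{lemma:reverse} applies to the consecutive pair $(X,S_1)$ with $Z=A$ (hence $Y=D_1$) via condition (C4), because $x<d_1$ and $s_1<a$; if instead $S_1$ is the leftmost disk, reading right-to-left makes $S_1$ the last disk, and Lemma~\ref{lemma:reverse} applies to the consecutive pair $(D_1,A)$ with $Z=S_1$ via condition (C1), because $d_1>a>s_1$. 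Either way this contradicts optimality.

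\emph{Inductive step.} Assume the statement for $k-1\geq 1$ and let $k\leq n/2$. By induction our optimal placement has the $(k-1)$-structure: the $k-1$ largest and $k-1$ smallest disks occupy a consecutive block $\mathcal{B}$ with $D_{k-1}$ at one end and $S_{k-1}$ at the other. Every disk outside $\mathcal{B}$ (there are $n-2(k-1)\geq 2$ of them) has size in $[s_k,d_k]$, hence strictly between $s_{k-1}$ and $d_{k-1}$; $D_k$ is the largest and $S_k$ the smallest among them. Now $D_{k-1}$ and $S_{k-1}$ can play the role the walls played in the base case: if $D_k$ lies on the $S_{k-1}$-side of $\mathcal{B}$ but is not adjacent to $S_{k-1}$, then reversing the block from the outer neighbour of $S_{k-1}$ out to $D_k$ triggers condition (C4) (or condition (C2), if $D_k$ happens to be the last disk of the whole sequence), using that this outer neighbour has size $<d_k$ and that $s_{k-1}$ is smaller than every disk outside $\mathcal{B}$; hence in that situation $D_k$ is adjacent to $S_{k-1}$. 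By the mirror-image argument, $S_k$ is adjacent to $D_{k-1}$ whenever it lies on the $D_{k-1}$-side. It therefore suffices to show that $D_k$ does lie on the $S_{k-1}$-side of $\mathcal{B}$ and $S_k$ on the $D_{k-1}$-side; granting this, $S_k$, $\mathcal{B}$, $D_k$ occur consecutively in this order, forming a block of $2k$ disks terminated by $S_k$ and $D_k$ with $S_kD_{k-1}$ and $D_kS_{k-1}$ consecutive, which is exactly the statement for~$k$.

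\emph{The main obstacle} is precisely this last point: excluding the ``wrong-side'' arrangements --- $D_k$ and $S_k$ on the same side of $\mathcal{B}$ (which in particular covers the case in which $\mathcal{B}$ is a prefix or suffix of the whole sequence, so that all outside disks lie on one side), or $D_k$ on the $D_{k-1}$-side while $S_k$ is on the $S_{k-1}$-side. A reversal confined to one side of $\mathcal{B}$ cannot relocate a disk across $\mathcal{B}$, so in these cases one is forced to reverse a block that engulfs part, or all, of $\mathcal{B}$ --- for example, when $S_k$ abuts the $D_{k-1}$-end of $\mathcal{B}$ from the wrong side, reversing the block consisting of $S_k$ together with the whole of $\mathcal{B}$, using $D_k$ (or the outermost disk of the reversed block) as the disk ``$A$'' of Lemma~\ref{lemma:reverse}. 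Each such sub-case should again collapse to a single invocation of Lemma~\ref{lemma:reverse}, but finding the right block to reverse, keeping orientation straight, and disposing of the sub-cases where $\mathcal{B}$ meets the boundary of the sequence is where I expect essentially all of the effort to lie.
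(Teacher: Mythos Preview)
Your approach is the paper's: induction on~$k$ with Lemma~\ref{lemma:reverse} as the sole engine. The base case and the ``correct side but not adjacent'' sub-case are handled correctly and coincide with the paper's argument. But the proposal stops precisely where the work is---the wrong-side configurations are only named, not disposed of---so as written it is a plan, not a proof.

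Two observations make the missing part routine rather than delicate. First, you are over-constraining yourself by trying to \emph{relocate}~$D_k$ across~$\mathcal{B}$: all you need is \emph{some} span-decreasing reversal to contradict optimality, and the resulting order may destroy~$\mathcal{B}$ entirely. Concretely, if $D_k$ lies on the $D_{k-1}$-side, take $A=D_k$, let $B$ be its neighbour towards~$\mathcal{B}$, and set $Z=S_{k-1}$; then (C1) or (C3) fires because every disk outside~$\mathcal{B}$ other than $D_k$ has size strictly between $s_{k-1}$ and~$d_k$. The only configuration this misses is $D_k$ adjacent to $D_{k-1}$ with $S_{k-1}$ an endpoint of the whole sequence; one further reversal (now with $A=D_{k-1}$, $B=D_k$, $Z=S_k$, read in the opposite direction) handles it. That is the entire ``obstacle''.

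Second, do not treat $D_k$ and $S_k$ symmetrically. Pin down $D_k$ first as above; once $D_k$ is known to be the outer neighbour of $S_{k-1}$, only two positions remain for~$S_k$ (left of $D_{k-1}$, or right of $D_k$), and each is a single invocation of Lemma~\ref{lemma:reverse}. This sequential ordering is what keeps the case analysis to a handful of lines rather than the tangle you anticipate. (Incidentally, your worked example---``$S_k$ abuts the $D_{k-1}$-end from the wrong side''---is garbled: that is the \emph{correct} side for~$S_k$.)
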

\begin{proof}
  We use induction over~$k$. For~$k=1$, it suffices to prove that
  $S_1$ and $D_1$ are consecutive, so assume for a contradiction that
  this is not the case. Let $A = D_1$, $Z=S_1$, assume $A$ is to the
  left of~$Z$, and let $B$ be the right neighbor of~$A$.  By
  Lemma~\ref{lemma:reverse} (Case~\ref{enum:reverse-1}
  or~\ref{enum:reverse-3}), the sequence can now be improved by
  reversing the subsequence from~$B$ up to~$Z$.

  Assume now that $k > 1$ and that the statement holds for~$k-1$. This
  means that there is a consecutive subsequence of the disks~$\{S_1,
  \dots, S_{k-1}, D_1, \dots, D_{k-1}\}$, terminated by disk~$S_{k-1}$
  at the, say, right end and disk~$D_{k-1}$ at the left end, as in the
  example of Figure~\ref{figure:lemma-linear}.
  \begin{figure}[t]
    \centerline{\includegraphics[scale=0.8]{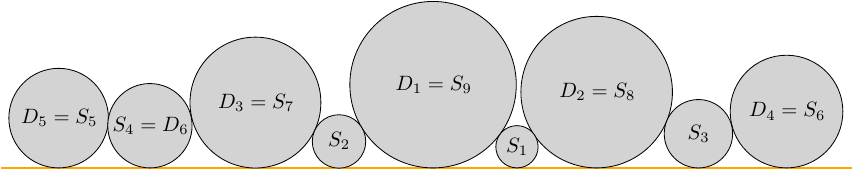}}
    \caption{An optimal placement in the linear case. For instance for
      $k=2$, the disks in $\{S_1,S_2,D_1,D_2\}$ form the consecutive
      subsequence starting with $S_2$ and ending with $D_2$.} 
    \label{figure:lemma-linear}
  \end{figure}

  We first show that the right neighbor of~$S_{k-1}$ is~$D_k$.  Assume
  this is not the case. We distinguish four cases:
  \begin{enumerate}[(1)]
  \item If $D_{k}$ appears to the right of $S_{k-1}$ (but not
    immediately adjacent), then we apply Lemma~\ref{lemma:reverse}
    (Case~\ref{enum:reverse-2} or~\ref{enum:reverse-4}) with $A =
    S_{k-1}$, $B$ the right neighbor of~$S_{k-1}$, and~$Z = D_{k}$.
  \item If $D_{k}$ appears to the left of $S_{k-1}$, then it must
    appear to the left of~$D_{k-1}$. If $D_{k}$ is not the left
    neighbor of~$D_{k-1}$, then apply Lemma~\ref{lemma:reverse}
    (Case~\ref{enum:reverse-1} or~\ref{enum:reverse-3}) with $A =
    D_{k}$, $B$ the right neighbor of~$D_{k}$, and $Z = S_{k-1}$.
  \item If $D_{k}$ is the left neighbor of~$D_{k-1}$ and $S_{k-1}$ is
    not the rightmost disk, then apply Lemma~\ref{lemma:reverse}
    (Case~\ref{enum:reverse-3}) with $A = D_{k}$, $B = D_{k-1}$, and
    $Z = S_{k-1}$.
  \item If $D_{k}$ is the left neighbor of~$D_{k-1}$ and $S_{k-1}$ is
    the rightmost disk, then $S_{k}$ appears somewhere to the left of
    $D_{k}$. We apply Lemma~\ref{lemma:reverse}
    (Case~\ref{enum:reverse-1} or~\ref{enum:reverse-3}) with $A =
    D_{k-1}$, $B = D_{k}$, and $Z = S_{k}$.
  \end{enumerate}

  We next show that the left neighbor of~$D_{k-1}$ is~$S_{k}$. Assume
  this is not the case.  If $S_{k}$ appears somewhere to the left
  of~$D_{k-1}$, apply Lemma~\ref{lemma:reverse}
  (Case~\ref{enum:reverse-1} or~\ref{enum:reverse-3}) with $A =
  D_{k-1}$, $B$ the left neighbor of~$D_{k-1}$, and $Z = S_{k}$.  If,
  on the other hand, $S_{k}$ appears to the right of~$D_{k}$, apply
  Lemma~\ref{lemma:reverse} (Case~\ref{enum:reverse-2}
  or~\ref{enum:reverse-4}) with $A = S_{k}$, $B$ the left neighbor
  of~$S_{k}$, and $Z = D_{k-1}$. (Note that in this case $B$ might
  be~$D_{k}$.)
\end{proof}

\begin{theorem}
  \label{theorem:optimal}
  Let $\mathcal{D}$ be a linear case instance of $n$
  disks~$D_1,\ldots,D_n$ of sizes~$d_{1} \geq d_2 \geq \dots \geq
  d_{n}$. If $n$ is even, then the following ordering is optimal:
  \[  \dots, D_{n-5}, D_5, D_{n-3}, D_3, D_{n-1}, D_{1}, D_{n}, D_{2},
  D_{n-2}, D_{4}, D_{n-4},D_{6}, \dots
  \]
  For odd~$n$, the median disk needs to be appended at the end of the
  sequence with the larger size difference.
\end{theorem}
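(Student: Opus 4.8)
The plan is to read the optimal ordering off Lemma~\ref{lemma:ordering} by peeling off extreme pairs from the middle outward. As a preliminary I would note that an optimal placement exists: in a linear case instance each ordering admits a unique placement of minimal span (in an optimum every disk touches both neighbors), and there are only finitely many orderings. Lemma~\ref{lemma:ordering} assumes distinct sizes, so I would first prove the theorem under that assumption and then remove it: the linear-case inequalities $1/z < 1/a+1/b$ and $z>(\sqrt2-1)a$ are strict and hence survive small perturbations of the sizes, while for a fixed ordering the span depends continuously on the sizes, so an ordering optimal for all nearby distinct-size instances is also optimal in the limit.

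The heart of the proof is an induction on $k$ establishing that, in any optimal placement, the block of the $2k$ extreme disks $D_1,\dots,D_k,S_1,\dots,S_k$ is determined up to reversal and coincides with the claimed zigzag restricted to those disks. For $k=1$, Lemma~\ref{lemma:ordering} says $D_1,S_1$ are consecutive, so the block is $D_1S_1$ up to reversal. For the step, Lemma~\ref{lemma:ordering} for $k$ tells us the $2k$-block is consecutive, terminated by $S_k$ and $D_k$, with $D_kS_{k-1}$ and $S_kD_{k-1}$ consecutive pairs; since the $2(k-1)$-block (determined up to reversal by the induction hypothesis, with $S_{k-1}$ and $D_{k-1}$ at its two ends) is a contiguous sub-interval of the $2k$-block, the only option is to attach $D_k$ immediately outside the $S_{k-1}$-end and $S_k$ immediately outside the $D_{k-1}$-end. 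This determines the $2k$-block uniquely from the previous one. Unrolling, the block reads $\dots,S_4,D_3,S_2,D_1,S_1,D_2,S_3,D_4,\dots$, which after the substitution $S_j=D_{n+1-j}$ is precisely the sequence in the statement; for even $n=2m$ the choice $k=m$ exhausts all $n$ disks, so the optimal ordering is the stated one (up to the irrelevant overall reversal).

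For odd $n=2m+1$, the choice $k=m$ still forces the $2m$ extreme disks into the zigzag block terminated by $S_m$ and $D_m$, and the leftover median $M=D_{m+1}$, not lying in this consecutive block, must occupy one of its two ends; hence only two placements are possible, with $M$ next to $D_m$ or next to $S_m$. Comparing spans is a one-line computation via Lemma~\ref{lemma:footpoint}: appending $M$ at the $D_m$-end lengthens the span by $2d_md_{m+1}+d_{m+1}^2-d_m^2$ and at the $S_m$-end by $2s_md_{m+1}+d_{m+1}^2-s_m^2$ with $s_m=d_{m+2}$; since $x\mapsto 2d_{m+1}x-x^2$ is a downward parabola with apex at $x=d_{m+1}$ and $d_{m+2}\le d_{m+1}\le d_m$, the difference of the two increments equals $(d_{m+1}-d_{m+2})^2-(d_m-d_{m+1})^2$, so $M$ belongs at whichever end is farther from it in size --- the end with the larger size difference --- and both ends are optimal when the two differences coincide.

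I expect the main obstacle to be the bookkeeping in the inductive step rather than any hard inequality: one has to argue carefully that the smaller block sits as a contiguous sub-interval of the larger one (so that ``attach two disks at the two ends'' is genuinely the only possibility) and to keep track of the reversal symmetry so that the recursion pins down a single ordering rather than a larger family. Existence of an optimum, the perturbation handling equal sizes, and the span comparison for odd $n$ are all routine.
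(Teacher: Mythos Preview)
Your proposal is correct and follows essentially the same approach as the paper: both arguments hinge entirely on Lemma~\ref{lemma:ordering}, use a perturbation to reduce to distinct sizes, and observe that the median disk must go at one of the two ends in the odd case. You spell out more explicitly than the paper does how Lemma~\ref{lemma:ordering} (applied for all $k$) pins down the ordering via the peeling induction, and you carry out the span comparison for the median disk in detail, whereas the paper's proof compresses all of this into a one-line ``contradiction to Lemma~\ref{lemma:ordering}'' and a one-line remark about the median; your limit-based perturbation is a minor variant of the paper's contradiction-based one.
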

\begin{proof}
  Let $\mathcal{D}$ be in the given ordering, and assume a better
  ordering $\mathcal{D'}$ exists. We can modify the disk sizes slightly
  so as to make them unique while keeping~$\mathcal{D'}$ better
  than~$\mathcal{D}$. But then we have a contradiction to
  Lemma~\ref{lemma:ordering}. If~$n$ is odd, then the only
  possible placements of the median disk are the left end and
  the right end, so choosing the end with the larger size difference
  gives the optimal solution.
\end{proof}

\section{\NP-Hardness of the General Case}
\label{section:generalCase}

Let us denote the decision version of our problem
as~\coinsonaline. Its input is a set of disks with rational radii and
a rational number~$\delta > 0$, the question is whether there is a feasible
placement of the disks with span at most~$\delta$.

\begin{theorem}
  \label{theorem:nphard}
  \coinsonaline is \NP-hard, even when the ratio of the largest and
  smallest disk size is bounded by six and when all numbers are given
  in unary notation.
\end{theorem}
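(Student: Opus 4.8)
The plan is to reduce from \threepartition, matching the strategy already announced in the introduction and paralleling the reduction of D\"urr et al.~\cite{Duerr2017}. Recall that in \threepartition we are given $3m$ positive integers $a_1,\dots,a_{3m}$ with $\sum_i a_i = mB$ and each $a_i$ strictly between $B/4$ and $B/2$, and we ask whether the $a_i$ can be partitioned into $m$ triples each summing to exactly $B$; this problem is strongly \NP-hard, so we may assume the $a_i$ are polynomially bounded. From such an instance I would build a family of disks together with a target span $\delta$ so that a placement of span at most $\delta$ exists if and only if the \threepartition instance is a yes-instance.

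The construction has two ingredients. First, a collection of $m+1$ large \emph{container} disks, all of (nearly) the same radius, whose footpoints in any span-$\delta$ placement are forced into a rigid left-to-right chain, consecutively touching, so that they create exactly $m$ equal gaps between consecutive large disks. Using Lemma~\ref{lemma:footpoint} to turn tangency into footpoint distances and Lemma~\ref{lemma:gap} to control which small disks can ``hide'' in a gap, I would tune the large radius so that (i) a small ``item'' disk representing $a_i$ has size proportional to $a_i$ (after a suitable affine normalization so that the forbidden ratio window $B/4<a_i<B/2$ maps into a size range that keeps the overall largest/smallest ratio at most six), and (ii) three item disks with sizes encoding $a_i,a_j,a_k$ fit side-by-side inside one large gap precisely when $a_i+a_j+a_k\le B$, and leave no slack only when the sum is exactly $B$. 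The second ingredient is the choice of $\delta$: set it to the span of the rigid chain of $m+1$ large disks (the gaps are then exactly as wide as the large disks force them to be), so that no item disk may protrude past the leftmost/rightmost large disk and no item disk may sit on top of or outside the chain. This pins every item disk into one of the $m$ gaps, and by the \threepartition bounds each gap holds exactly three items; packing $3m$ items into $m$ gaps of total capacity $mB$ with each item in $(B/4,B/2)$ is possible iff the triples balance, which is exactly the \threepartition condition.

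Concretely, the steps I would carry out are: (1) fix the normalization and compute the large radius $R$ and the item sizes as explicit rationals, verifying the size-ratio bound of six; (2) prove the \emph{forward} direction — given a valid \threepartition, place the $m+1$ large disks in a touching chain and drop the three items of each triple into the corresponding gap, checking via Lemmas~\ref{lemma:footpoint} and~\ref{lemma:gap} that the footpoint inequalities $\fA\fG_1+\fG_1\fG_2+\fG_2\fG_3+\fG_3\fB\le\fA\fB$ hold with equality, so the span is exactly $\delta$; (3) prove the \emph{reverse} direction — from any placement of span at most $\delta$, argue that the large disks must form the rigid chain (any deviation, or any large disk not in the chain, already forces span $>\delta$), that every item disk lies in some gap (it cannot hide inside another item, cannot rest on a large disk, and cannot lie outside the chain without increasing the span), that each gap receives exactly three items (fewer than three somewhere would force, by the $a_i>B/4$ lower bound, more than three elsewhere, exceeding the gap capacity), and hence that the induced partition into triples is valid; (4) note the reduction is polynomial since \threepartition is strongly \NP-hard and all our numbers are polynomial in $m$ and $B$.

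The main obstacle, and where I expect to spend the most care, is the reverse direction — specifically the rigidity argument that forces the $m+1$ large disks into a single touching left-to-right chain and forbids any item disk from escaping a gap. Unlike the linear case of Section~\ref{section:linearCase}, here small disks \emph{can} hide in holes (that is the whole point of the construction), so I cannot appeal to Lemma~\ref{lemma:reverse}; instead I must do a direct combinatorial case analysis on the footpoint sequence, bounding the span from below by summing footpoint distances via Lemma~\ref{lemma:footpoint} and showing every configuration other than the intended one is strictly wider than $\delta$. Getting the geometry tight enough that the gap capacities correspond \emph{exactly} to the number $B$, with no accumulated rounding slack over $m$ gaps that would let a cheating partition squeeze in, is the delicate quantitative point; this is handled by choosing $\delta$ to equal the intended span exactly (a rational number) and by the integrality of the $a_i$, so that ``sum of three sizes fits'' is equivalent to ``$a_i+a_j+a_k\le B$'' with equality forced globally by the total $\sum a_i = mB$.
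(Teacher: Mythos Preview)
Your overall plan---reduce from \threepartition, use $m{+}1$ large disks to create $m$ gaps, and encode the $a_i$ as small item disks---is the right architecture, but step~(2) hides a fatal arithmetic problem. If three item disks $X,Y,Z$ sit directly in the gap between two large disks of size~$1$, the footpoint chain $1,X,Y,Z,1$ has, by Lemma~\ref{lemma:footpoint}, total length
\[
2x+2xy+2yz+2z \;=\; 2(x+z)(1+y).
\]
This is \emph{not} a function of $x+y+z$: the middle disk enters multiplicatively, the order of the three disks matters, and there is no affine encoding $d_i = \alpha a_i + \beta$ that turns ``three item disks fit in the gap'' into the linear condition $a_i+a_j+a_k \le B$ that \threepartition requires. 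So the equivalence you assert in both the forward and reverse directions does not hold as stated, and the reduction breaks.

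The paper repairs this with an additional layer of structure. Between every pair of consecutive large (\emph{outer frame}) disks it forces four \emph{inner frame} disks, all of one fixed size~$s_0$; the partition disks then sit in the three slots between consecutive inner frame disks. Now each partition disk touches only $s_0$-disks on both sides, so the footpoint chain $1,s_0,x,s_0,y,s_0,z,s_0,1$ has length $4s_0 + 4s_0(x+y+z)$---genuinely linear in $x+y+z$, which is exactly what makes the equivalence with $a_i+a_j+a_k\le B$ work (Lemma~\ref{lemma:partition}). The price is that the rigidity argument becomes much more involved: one must show that in any span-$\delta$ placement the inner frame disks are forced into precisely this pattern and that no partition disk can occupy any other hole. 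The paper achieves this by adding further \emph{filler} and \emph{end} disks of three more carefully chosen sizes that plug every remaining cavity and admit a unique arrangement (Lemma~\ref{lemma:pattern}, proved by an exhaustive case analysis). Your proposal is missing both the linearizing inner-frame idea and the filler machinery needed to enforce it; without these, the reduction does not go through.
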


Our proof is by reduction from
\threepartition~\cite[Problem~SP15]{GareyJohnson:1979}.  An instance
of \threepartition consists of $3m$~integers $\mathcal{A} =
a_1,\dots,a_{3m}$ and another integer~$B$, with $\sum_{i=1}^{3m} a_i =
mB$ and $\nicefrac{B}{4}<a_i<\nicefrac{B}{2}$ for all~$i$.
\threepartition decides if there is a partition of $\mathcal{A}$ into
$m$ three-element groups $A_1,\dots,A_m$ such that $\sum_{a\in A_i} a
= B$ for each group~$A_{i}$.

Given a \threepartition instance $(\mathcal{A}, B)$, we construct a
family~$\mathcal{D}$ of $12m+11$ disks, as follows:
\begin{itemize}
\item $m+1$ disks of size $1$, we will refer to these disks as
  \emph{outer frame disks};
\item $4(m+1)$ disks of size $s_0=\nicefrac{33}{100} = 0.33$, we will
  refer to these disks as \emph{inner frame disks};
\item $2(m+1)$ disks of size
  $s_1=\nicefrac{s_0}{1+s_0}=\nicefrac{33}{133}$ $(\approx 0.24812)$,
  we will refer to these disks as \emph{large filler disks};
\item $2(m+1)$ disks of size
  $s_2=\nicefrac{s_1}{1+s_1}=\nicefrac{33}{166}$ $(\approx 0.198795)$,
  we will refer to these disks as \emph{small filler disks};
\item $2$ disks of size~$s_3 = \nicefrac{1-s_0^2 - 2s_0}{4s_0} =
  \nicefrac{2311}{13200}$ $(\approx 0.175076)$, referred to as
  \emph{end disks}; 
\item $3m$ disks $D_1,\dots,D_{3m}$, referred to as \emph{partition
  disks}, where $d_i = \nicefrac{17}{99}\big(\nicefrac{3a_i}{100B} +
  \nicefrac{99}{100}\big)$.
\end{itemize}
In the following, we will identify disks by their size or type.  We
observe that all disk sizes are rational, where numerator and
denominator can be computed in time polynomial in the input size. The
radius of a disk is obtained by squaring its size.  Note that, if we multiply all
radii by the product of the denominators, then we obtain in polynomial
time an instance of our problem with integer radii.

\begin{lemma}
  \label{lemma:minimal}
  Each end disk and partition disk has size at least $s_{4} =
  \nicefrac{2261}{13200}> 0.17128$.
\end{lemma}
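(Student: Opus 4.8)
The plan is to treat the end disks and the partition disks separately, since for both types the claimed bound reduces to a short exact computation with rationals. For the end disks, whose size is the constant $s_3 = (1 - s_0^2 - 2s_0)/(4s_0)$ with $s_0 = 33/100$, I would just evaluate the expression: the numerator is $1 - s_0^2 - 2s_0 = (10000 - 1089 - 6600)/10000 = 2311/10000$ and the denominator is $4s_0 = 132/100$, so $s_3 = 2311/13200$. Since $2311 > 2261$, this already gives $s_3 > s_4$, with slack $s_3 - s_4 = 50/13200 = 1/264$.

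For the partition disks, the key input is the defining inequality $B/4 < a_i$ of every \threepartition instance, that is, $a_i/B > 1/4$. The map $t \mapsto \frac{17}{99}\bigl(\frac{3}{100}t + \frac{99}{100}\bigr)$ is increasing, so substituting the threshold $t = 1/4$ yields $d_i > \frac{17}{99}\bigl(\frac{3}{400} + \frac{99}{100}\bigr) = \frac{17}{99}\cdot\frac{399}{400} = \frac{6783}{39600}$. The final step is to cancel the common factor: $6783 = 3\cdot 2261$ and $39600 = 3\cdot 13200$, hence $d_i > 2261/13200 = s_4$ for every $i$, and the inequality is strict because $a_i > B/4$ is. Together with the end-disk case and the numerical check $s_4 = 2261/13200 = 0.171287\ldots > 0.17128$, this proves the lemma.

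I do not expect a real obstacle: the computation is routine, and the only thing to watch is the exact cancellation of the factor $3$ in $6783/39600$. That cancellation is precisely what pins down $s_4$ — it is essentially the tightest constant that lower-bounds all partition disks, and it is the value that will later certify that the ratio of the largest size ($1$) to the smallest size in the construction is $1/s_4 < 6$, as asserted in Theorem~\ref{theorem:nphard}.
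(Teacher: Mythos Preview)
Your proposal is correct and follows essentially the same approach as the paper: treat end disks and partition disks separately, invoke the constraint $a_i > B/4$, and reduce everything to the rational computation $\tfrac{17}{99}\cdot\tfrac{399}{400} = \tfrac{2261}{13200}$. You are simply more explicit than the paper about verifying $s_3 = 2311/13200 > s_4$ and about the factor-of-$3$ cancellation, but the argument is identical.
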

\begin{proof}
  Since $s_3 > s_4$, the statement is trivial for end disks.  Let
  $a_{i} \in \mathcal{A}$.  From $a_{i} > \nicefrac B4$ follows that the size
  $d_{i}$ of the corresponding partition disk is $d_{i} \geq
  \nicefrac{17}{99}\big(\nicefrac{3}{400} + \nicefrac{99}{100}\big) =
  \nicefrac{17}{99}\cdot\nicefrac{399}{400} = \nicefrac{2261}{13200}$.
\end{proof}

\subparagraph{Equivalence of the problem instances.}

We show that $\mathcal{D}$ has a placement with span~$2(m+1)$ if and
only if~$(\mathcal{A}, B)$ is a Yes-instance of~\threepartition,
implying the \NP-hardness of~\coinsonaline.

The $m+1$ outer frame disks alone already require a span of~$2(m+1)$,
so no better span is possible.  A placement of all disks
of~$\mathcal{D}$ with span~$2(m+1)$ therefore implies that consecutive
outer frame disks touch, and that all remaining disks fit into the
space under these outer frame disks.

Let's call the $m$~spaces between two consecutive (and touching) outer
frame disks \emph{gaps}.  The space to the left of the leftmost outer
frame disk is called the \emph{left end}, the \emph{right end} is
defined symmetrically.

\begin{lemma}
  \label{lemma:pattern}
  There is only one pattern of frame and filler disks (ignoring end
  disks and partition disks) that has span~$2(m+1)$.
\end{lemma}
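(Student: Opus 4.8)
The plan is to argue that the outer frame disks, being forced to touch pairwise, create $m$ congruent gaps (and two congruent ends), and then to show that within one gap there is essentially no freedom in how the inner frame disks and filler disks can be arranged. First I would compute, using Lemma~\ref{lemma:footpoint} and Lemma~\ref{lemma:gap}, the exact shape of a gap between two unit disks: the largest disk that fits in such a gap has size $1/g = 1/1 + 1/1$, i.e.\ size $1/2$, which is \emph{larger} than $s_0 = 0.33$, so an inner frame disk does fit, but two inner frame disks side by side do not (since $2s_0 > 1$ would be needed; more precisely one checks $2\cdot 2\cdot 1 \cdot s_0 < \ldots$). The key numerical coincidences built into the construction are $s_1 = s_0/(1+s_0)$ and $s_2 = s_1/(1+s_1)$, which by Lemma~\ref{lemma:gap} say exactly that $s_1$ is the largest disk fitting in the gap between a unit disk and an $s_0$-disk, and $s_2$ is the largest disk fitting in the gap between an $s_0$-disk and an $s_1$-disk. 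So these inequalities are tight, meaning that the filler disks, if present, must sit in precisely those gaps and must touch both neighbours.

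Next I would set up the counting/packing argument. Each gap between two touching outer frame disks has footpoint-width $2\cdot 1\cdot 1 = 2$. An inner frame disk touching the left unit disk occupies footpoint-length $2\cdot 1 \cdot s_0 = 0.66$ from the left wall of the gap; symmetrically on the right. That leaves a central sub-gap, and I would show by the size bounds (every remaining disk has size $\ge s_4 > 0.17128$ by Lemma~\ref{lemma:minimal}, and the filler disks have the sizes $s_1, s_2$) that the only way to fill a gap exactly is: unit--$s_0$--$s_1$--$s_2$--(partition disks)--$s_2$--$s_1$--$s_0$--unit, i.e.\ two inner frame disks, two large fillers, two small fillers, and a block of partition disks in the middle, per gap. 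One shows no other disk can be squeezed between an outer frame disk and the adjacent $s_0$ disk except an $s_1$ disk (the gap there has exactly that size), nothing fits between $s_0$ and $s_1$ except an $s_2$ disk, and nothing fits between a unit disk and the gap wall on the end except... — this is where the end disks of size $s_3 = (1 - s_0^2 - 2s_0)/(4s_0)$ come in, chosen (via Lemma~\ref{lemma:wallgap}-style reasoning, or rather a direct footpoint computation against the outer wall and the first $s_0$ disk) so that exactly one $s_3$ disk fits snugly in each of the two ends. Matching the total inventory $4(m+1)$ inner frame, $2(m+1)$ large filler, $2(m+1)$ small filler, $2$ end disks against the $m$ gaps and $2$ ends then forces the pattern to be the claimed unique one, with the slack in each gap being exactly the amount later used to encode $\sum_{a \in A_i} a = B$.

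The main obstacle, and the part requiring real care rather than bookkeeping, is ruling out ``creative'' packings: in principle a gap might contain, say, zero inner frame disks with several filler and partition disks stacked against the unit walls instead, or an inner frame disk not touching either wall, or an $s_2$ disk tucked into a sub-gap it wasn't designed for. To close this I would use the size lower bound $s_4$ together with the fact that the relevant gap-capacities $1/g = 1/a + 1/b$ are strictly less than $1/s_4$ for every adjacency except the three intended ones ($\text{unit}$--$s_0$ admits $s_1$; $s_0$--$s_1$ admits $s_2$; the end-wall--$s_0$ admits $s_3$), so any disk placed in a non-designated gap would have to have size below $s_4$, contradicting Lemma~\ref{lemma:minimal} unless it is itself a filler disk of exactly size $s_1$ or $s_2$ — and then a short case analysis on where such a filler could go pins it to its designated slot. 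I expect the write-up to proceed: (i) gap and end widths; (ii) an inner frame disk must touch an outer frame disk, and at most/at least the right number fit; (iii) filler disks are forced into the $s_0$--unit and $s_0$--$s_1$ adjacencies by the tight Lemma~\ref{lemma:gap} equalities; (iv) end disks are forced into the two ends; (v) conclude the pattern is unique by exhausting the inventory.
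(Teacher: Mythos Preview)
Your proposal rests on a mistaken picture of the target pattern. You claim each gap contains only \emph{two} inner frame disks, arranged as unit--$s_0$--$s_1$--$s_2$--(partition block)--$s_2$--$s_1$--$s_0$--unit. In fact the pattern (Figure~\ref{figure:construction}) has \emph{four} inner frame disks per gap: in footpoint order it is $1,\,s_2,\,s_1,\,s_0,\,s_0,\,s_0,\,s_0,\,s_1,\,s_2,\,1$, with the partition disks later slotted into the three spaces between consecutive $s_0$'s (cf.\ Lemma~\ref{lemma:partition}). Your assertion that ``two inner frame disks side by side do not'' fit in a gap is false: the footpoint length of $1,s_0,s_0,s_0,s_0,1$ is $4s_0 + 6s_0^{2} \approx 1.973 < 2$, so four fit (and five do not). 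With only two $s_0$'s per gap your inventory accounts for at most $2m+4$ inner frame disks, not the required $4(m+1)=4m+4$. You also misread the definition of $s_2$: from $s_2 = s_1/(1+s_1)$ one gets $1/s_2 = 1 + 1/s_1$, so by Lemma~\ref{lemma:gap} the $s_2$-disk fills the gap between a \emph{unit} disk and an $s_1$-disk, not between $s_0$ and $s_1$ as you state. The nesting is $s_1$ in the $(1,s_0)$ corner, then $s_2$ in the $(1,s_1)$ corner. Finally, the lemma explicitly ignores end and partition disks, so step~(iv) of your outline and the appeals to the bound $s_4$ from Lemma~\ref{lemma:minimal} are out of place here.

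The paper's proof does follow the counting strategy you have in mind, but applied to the correct pattern and without a clean uniform ``gap-capacity'' argument: one shows a gap holds at most four $s_0$-disks and an end at most two (forcing equality by inventory); then that large fillers can only sit in the $(1,s_0)$ corners (at most two per gap, one per end, forced to equality); then the same for small fillers in the $(1,s_1)$ corners. Ruling out the ``creative'' alternative placements is done by a battery of explicit footpoint-length computations collected in tables, not by a single size-threshold inequality.
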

The pattern is shown in Figure~\ref{construction:bigPicture}.  Each
gap contains eight disks of sizes
$s_2,\,s_1,\,s_0,\,s_0,\,s_0,\,s_0,\,s_1,\,s_2$; see
Figure~\ref{construction:detail}. The left end contains four disks of
sizes $s_0, s_0, s_1, s_2$, the right end contains disks of
sizes~$s_2, s_1, s_0, s_0$.
\begin{figure}[t]
  \begin{subfigure}[b]{.5\linewidth}
    \centerline{\raisebox{3mm}{\includegraphics[width=\linewidth]{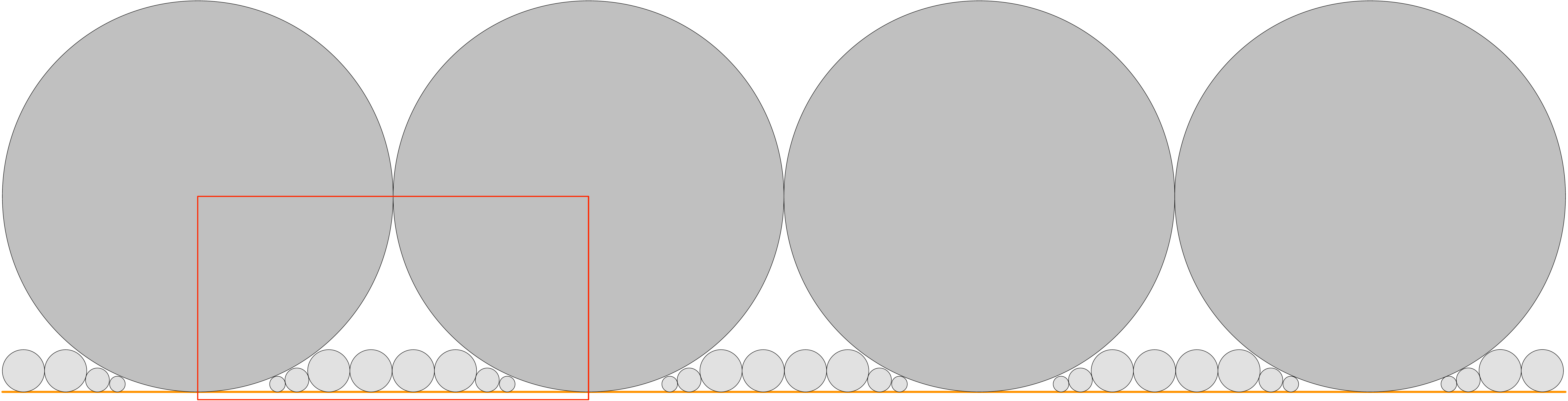}}}
    \caption{The overall picture for $m=3$.}
    \label{construction:bigPicture}
  \end{subfigure}
  \hfill
  \begin{subfigure}[b]{.45\linewidth}
    \centerline{\includegraphics[clip,trim={0 0 0 8cm},width=\linewidth]{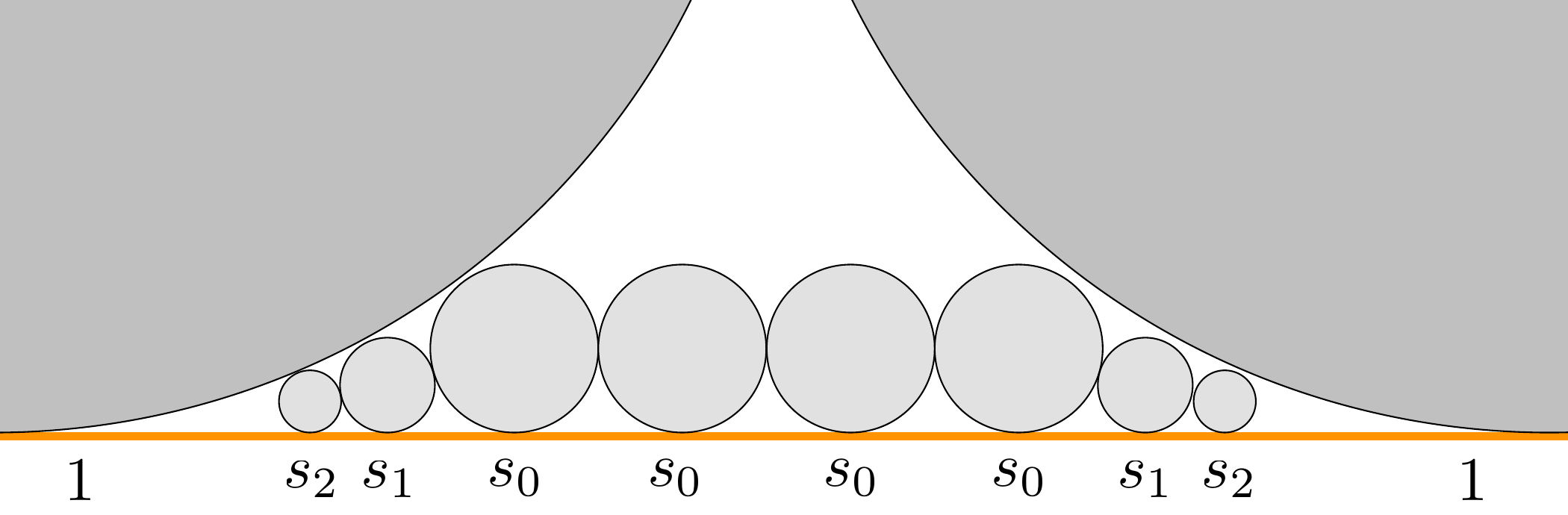}}
    \caption{The frame and filler disks inside a gap.}
    \label{construction:detail}
  \end{subfigure}
  \caption{The unique pattern of span $2(m+1)$ in
    Lemma~\ref{lemma:pattern}.}
  \label{figure:construction}
\end{figure}

Lemma~\ref{lemma:pattern} follows from the following observations
about a placement of span $2(m+1)$:
\begin{enumerate}[label=(\Alph*),leftmargin=2.4em]
\item\label{enum:5inner} A gap cannot contain five inner frame disks,
  as the total footpoint distance of the sequence $1, s_0, s_0, s_0,
  s_0, s_0, 1$ is $4s_0 + 8 s_0^2 = 2.1912$, implying that the outer
  frame disks do not touch.
\item\label{enum:3inner} The left end and the right end cannot contain
  three inner frame disks: the total footpoint distance of the
  sequence $1, s_0, s_0, s_0$ is $2s_{0} + 5s_{0}^{2} > 1.2045$, so
  this sequence does not fit in the end.
\item\label{enum:4inner} Since there are $4(m+1)$ inner frame
  disks,~\ref{enum:5inner} and~\ref{enum:3inner} imply 
  that each gap contains four inner frame disks, the left end and
  right end each contain two.
\item\label{enum:outerinner} By Lemma~\ref{lemma:gap}, a large filler
  disk fits exactly inside the space formed by a touching outer and
  inner frame disk.
\item\label{enum:2inner} Large filler disks cannot be placed between
  two inner frame disks inside a gap, as the total footpoint distance
  of the sequence $1, s_0, s_1, s_0, s_0, s_0, 1$ is $4 s_0 + 4 s_0^2
  + 4 s_1 s_2 > 2.0831$.
\item\label{enum:2large} Two large filler disks cannot be placed
  consecutively inside a gap, as the total footpoint distance of the
  sequence $1\ s_1\ s_1\ s_0\ s_0\ s_0\ s_0\ 1$ is
  $2s_1+2s_1^2+2s_0s_1+6s_0^2+2s_0>2.0965$.
\item\label{enum:1large} Only one large filler disk can appear in the
  left end and in the right end, filling the space between the outer
  and inner frame disk. Indeed, all other possibilities do not fit
  inside the end, see Table~\ref{table:DisksEnd}.
  
  \begin{table}[b]
    \caption{Impossible placements of disks in the right end.}
    \label{table:DisksEnd}
    \begin{center}
      \begin{tabular}{l||l|ll}
        \multicolumn{1}{c||}{type} & \multicolumn{1}{c|}{sequence} &
        \multicolumn{2}{c}{width} \\ 
        \hline\hline
        large filler & $1\  s_0\  s_1\  s_0$ & $2s_0+4s_0s_{1} +
        s_{0}^2$&$> 1.0964$\\ 
        & $1\  s_0\  s_0\  s_1$ & $2s_0+2s_0^{2} + 2s_0s_{1} + s_{1}^2$&$>
        1.1031$\\
        & $1\  s_1\  s_1\  s_0\  s_0$ & $2s_1+2s_1^{2} + 2s_0s_{1} + 3s_{0}^2$&$>
        1.1098$\\
        \hline
    	small filler & $1\  s_0\  s_2\  s_0$ &
  	$2s_0 + 4s_0s_{2} + s_{0}^2$ & $> 1.0313$\\
	& $1\  s_0\  s_0\  s_2$ &
	$2s_0 + 2s_0^{2} + 2s_0s_{2} + s_{2}^2$ & $> 1.0485$\\
  	& $1\  s_1\  s_2\  s_0\  s_0$ &
    	$2s_1 + 2s_1s_{2} + 2s_2s_0 + 3s_{0}^2$ & $> 1.0528$ \\
  	& $1\  s_2\  s_2\  s_1\  s_0\  s_0$ &
   	$2s_2+2s_2^{2} + 2s_2s_{1} + 2s_{1}s_{0} + 3s_{0}^2$ & $> 1.0657$\\
      \end{tabular}
    \end{center}
  \end{table}

\item\label{enum:2mlarge} Since there are $2(m+1)$ large filler disks,
  \ref{enum:outerinner}, \ref{enum:2inner}, \ref{enum:2large}, and
  \ref{enum:1large} imply that each gap contains two large filler
  disks, while the left end and right end both contain one.  Each
  large filler disk is positioned between the outer and the inner
  frame disk.
\item\label{enum:smallspace} By Lemma~\ref{lemma:gap}, a small filler
  disk fits exactly inside the space formed by an outer frame
  disk touching a large filler disk.
\item\label{enum:2small} A gap contains at most two small filler
  disks, each filling the space between the outer frame disk and the
  large filler disk. All other possibilities do not fit inside the
  gap, see Table~\ref{table:DisksGaps}.

  \begin{table}[t]
    \caption{Impossible placements of small filler disks in a gap.}
    \label{table:DisksGaps}
    \begin{center}
      \begin{tabular}{l|l@{\hspace{1mm}}l}
        \multicolumn{1}{c|}{sequence} & \multicolumn{2}{c}{total
          footpoint distance} \\ 
        \hline
        $1\  s_0\  s_2\  s_0\  s_0\  s_0\  1$ &
        $4s_0+4s_0s_2+4s_0^2$ & $>2.0180$ \\
        $1\  s_1\  s_2\  s_0\  s_0\  s_0\  s_0\  1$ &
        $2s_1+2s_1s_2+2s_2s_0+6s_0^2+2s_0$ & $>2.0395$ \\
        $1\  s_2\  s_2\  s_1\  s_0\  s_0\  s_0\  s_0\  1$ &
        $2s_2+2s_2^2+2s_2s_1+2s_1s_0+6s_0^2+2s_0$ & $>2.0524$\\
      \end{tabular}
    \end{center}
  \end{table}
\item\label{enum:smallfillerend} The left end and the right end
  contain at most one small filler disk, filling the space between the
  outer frame disk and the large filler disk. All other possibilities
  do not fit inside the end, see Table~\ref{table:DisksEnd}.

\item\label{enum:gap2small} Since there are $2(m+1)$ small filler
  disks, \ref{enum:smallspace}, \ref{enum:2small}, and
  \ref{enum:smallfillerend} imply that each gap contains two small
  filler disks, while the left end and right end each contain
  one. Each small filler disk is positioned between an outer frame
  disk and a large filler disk.
\end{enumerate}

\begin{lemma}
  \label{lemma:partition}
  Three end/partition disks $X$, $Y$, and $Z$ fit in the three gaps
  formed by the three pairs of consecutive inner frame disks in a
  common gap if and only if $x + y + z \leq \nicefrac{17}{33}$.
\end{lemma}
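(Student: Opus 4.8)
The plan is to analyze the geometry inside a single gap once the pattern of Lemma~\ref{lemma:pattern} is fixed. In that configuration, each gap contains four inner frame disks of size $s_0$, and consecutive inner frame disks touch; by Lemma~\ref{lemma:footpoint} the footpoint distance between two touching $s_0$-disks is $2s_0^2$. There are exactly three spaces between the four inner frame disks in a gap, and I want to place one end/partition disk in each space. So the first step is to write down, for a disk $X$ of size $x$ nestled between two touching inner frame disks, the horizontal extent it consumes: if $X$ touches both neighbours, Lemma~\ref{lemma:footpoint} gives that its footpoint lies at distance $2s_0 x$ from each neighbouring footpoint, so the two inner frame disks it separates must have footpoint distance $2s_0 x + 2s_0 x = 4 s_0 x$ — wait, that would force equality; instead the correct statement is that $X$ fits in the space between two inner frame disks whose footpoints are at distance $\ell$ if and only if $2 s_0 x \le \ell$ only when $X$ is pushed to one side. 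Since we need all three disks simultaneously in the gap, the right bookkeeping is additive along the footpoint axis.

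Concretely, the second step: let the four inner frame disks in a gap have footpoints $p_0 < p_1 < p_2 < p_3$. The total horizontal room available for the three end/partition disks $X,Y,Z$ (in this left-to-right order inside the gap) is governed by $p_3 - p_0$, which in the span-$2(m+1)$ pattern is a fixed constant $L$ determined by the surrounding large and small filler disks and the outer frame disk touching conditions; I would compute $L$ from the pattern in Figure~\ref{construction:detail}. Then $X$ occupies $[p_0{+}2s_0x]$-ish to the left, $Z$ occupies the right, and between $X$ and $Y$ and between $Y$ and $Z$ they touch each other, contributing footpoint distances $2xy$ and $2yz$ by Lemma~\ref{lemma:footpoint}. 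Summing the chain of footpoint distances from $p_0$ to $p_3$: we need $2s_0 x + 2xy + 2yz + 2s_0 z \le L$ in the tightest arrangement where each of $X,Z$ also touches its inner frame neighbour; but the cleanest invariant, and the one matching the clean bound $x+y+z \le 17/33$, comes from observing (using $1/s_1 = 1/s_0 + 1$, $1/s_2 = 1/s_1 + 1$ and Lemma~\ref{lemma:gap}) that the filler disks are exactly the maximal disks in their gaps, so the relevant available length reduces to $4 s_0^2$ per... no: the three inner-frame spaces together span a length equal to the footpoint distance from $p_0$ to $p_3$ minus nothing, and the condition becomes $2 s_0 x + 2 s_0 y + 2 s_0 z \le p_3 - p_0$ when each disk is wedged tightly against the inner frame disk to one side, which after dividing by $2 s_0$ and plugging $p_3 - p_0 = $ (the constant forced by the pattern) yields $x + y + z \le 17/33$.

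The third step is the converse direction: given $x + y + z \le 17/33$, exhibit a valid placement of $X,Y,Z$ in the three inner-frame spaces. Here I would place each disk resting against the inner frame disk to its left (or use the natural "push everything left" argument), verify non-overlap among $X,Y,Z$ and with the inner/large/small filler disks above and beside them using Lemmas~\ref{lemma:footpoint}--\ref{lemma:wallgap}, and check that the total still fits under the outer frame disk, i.e. that the span contribution of the gap remains exactly $2 s_0^2 \cdot(\text{count})$ and the outer frame disks still touch. The lower bound on the disk sizes from Lemma~\ref{lemma:minimal} is what guarantees these three disks are "too big to hide" deeper — each is larger than $s_4 > 0.17128 > (\sqrt2 - 1) s_0$, and $1/x < 1/s_0 + 1/s_0$, so by Lemmas~\ref{lemma:gap} and~\ref{lemma:wallgap} none of $X,Y,Z$ can slip into the gap between two inner frame disks in a way that frees extra space, nor into a wall gap; this pins down that the only way to accommodate them is one per inner-frame space.

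The main obstacle I expect is the geometric case analysis in the forward direction: ruling out that $X$, $Y$, or $Z$ could be placed somewhere other than one-per-inner-frame-space (for instance two of them stacked in the same space with the third displacing a filler disk, or one of them sitting in a wall-like gap at the gap boundary), and showing that no such alternative gains horizontal room. This is exactly where the carefully chosen constants $s_0 = 33/100$, $s_1, s_2$, $s_3$, and the size lower bound $s_4$ from Lemma~\ref{lemma:minimal} do their work, so the proof will consist largely of checking a handful of inequalities among these rationals to close off each alternative configuration; once those are dispatched, the additive footpoint-distance computation that produces the clean threshold $17/33$ is routine.
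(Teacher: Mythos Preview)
Your plan has a real conceptual gap: you treat the four inner frame disk footpoints $p_0<p_1<p_2<p_3$ as pinned by the pattern and then try to squeeze $X,Y,Z$ into three fixed-width slots, computing some constant $L=p_3-p_0$. But the inner frame disks are not fixed; inserting $X,Y,Z$ pushes them apart. The only hard constraint is that the two bounding \emph{outer} frame disks still touch, i.e.\ that the full footpoint chain across the gap has length at most~$2$. This is why your intermediate formulas (first $2s_0x+2xy+2yz+2s_0z$, then $2s_0x+2s_0y+2s_0z\le p_3-p_0$) do not land on~$\nicefrac{17}{33}$: the first has $X,Y,Z$ touching one another instead of being separated by inner frame disks, and the second both uses the wrong coefficient and the wrong constraint.

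The paper's argument is a one-line chain computation. Because by Lemma~\ref{lemma:minimal} each of $x,y,z$ exceeds $s_4>s_0/2$, Lemma~\ref{lemma:gap} says none of $X,Y,Z$ can hide between two touching $s_0$-disks; hence in the sequence $1,\,s_0,\,X,\,s_0,\,Y,\,s_0,\,Z,\,s_0,\,1$ every consecutive pair must touch in a tight packing, and by Lemma~\ref{lemma:footpoint} the total footpoint distance is
\[
2s_0+2s_0x+2s_0x+2s_0y+2s_0y+2s_0z+2s_0z+2s_0=4s_0(1+x+y+z).
\]
Requiring this to be at most~$2$ yields $x+y+z\le \tfrac{1}{2s_0}-1=\tfrac{50}{33}-1=\tfrac{17}{33}$, with equality giving a valid placement. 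Note also that the case analysis you worry about in your final paragraph---ruling out that $X,Y,Z$ sit anywhere other than one per inner-frame slot---is not part of this lemma at all: the statement already stipulates ``in the three spaces between consecutive inner frame disks''. Those alternative placements are excluded separately via Lemma~\ref{lemma:pattern} and the subsequent observations.
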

\begin{proof}
  By Lemma~\ref{lemma:gap}, the largest disk that fits in the space
  between two touching disks of size~$s_{0}$ has
  size~$\nicefrac{s_0}{2}$. By Lemma~\ref{lemma:minimal}, an
  end/partition disk has size at least~$s_4 > \nicefrac{s_{0}}{2}$, so
  it does not fit entirely in this space.  It follows that the total
  footpoint distance of the sequence $1, s_{0}, x, s_{0}, y, s_{0}, z,
  s_{0}, 1$ is at least $4s_0 + 4s_{0}x + 4s_{0}y + 4s_{0}z = 4s_{0}(x
  + y + z + 1)$.  $X$, $Y$, and $Z$ fit in the prescribed manner if
  and only if this total footpoint distance is at most two, proving
  the lemma.
\end{proof}

\begin{lemma}
  \label{lemma:end}
  Placing a disk $X$ in the space between the two consecutive inner
  frame disks in the left end or the right end causes the total span
  to increase if and only if $x > s_3$.
\end{lemma}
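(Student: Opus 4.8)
The plan is to exploit the rigid structure of the span‑$2(m+1)$ pattern given by Lemma~\ref{lemma:pattern}. In that pattern the left end consists of a touching chain of disks of sizes $s_0,s_0,s_1,s_2$, running from the left edge of the span towards the leftmost outer frame disk, with the $s_2$‑disk touching that outer frame disk; I place the outer frame disk's footpoint at position $f_0$, so its leftmost point — which is the left end of the whole span — lies at $f_0-1$. The right end is the mirror image, so it suffices to treat the left end. By Lemma~\ref{lemma:minimal} every end/partition disk has size larger than $s_0/2$, which by Lemma~\ref{lemma:gap} is the size of the largest disk that fits between two touching disks of size $s_0$; hence a disk $X$ inserted between the two inner frame disks of the left end is forced to touch both of them, and by Lemma~\ref{lemma:footpoint} their footpoint distance becomes exactly $4 s_0 x$.

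Next I argue that span $2(m+1)$ remains attainable exactly when the left‑end chain $(\text{size }1)$--$s_2$--$s_1$--$s_0$--$X$--$s_0$ can be fitted without the leftmost $s_0$‑disk protruding past $f_0-1$. Pushing the chain tightly against the outer frame disk maximizes the $x$‑coordinate of that disk's leftmost point, and summing footpoint distances along the chain via Lemma~\ref{lemma:footpoint} places it at
\[
 f_0 - 2s_2 - 2 s_1 s_2 - 2 s_0 s_1 - 4 s_0 x - s_0^2 .
\]
Hence the span is forced above $2(m+1)$ if and only if this value is smaller than $f_0-1$, that is, iff $4 s_0 x + s_0^2 + 2(s_0 s_1 + s_1 s_2 + s_2) > 1$.

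Finally I simplify the constant: $s_2=s_1/(1+s_1)$ gives $s_1 s_2 + s_2 = s_1$, and then $s_1=s_0/(1+s_0)$ gives $s_0 s_1 + s_1 s_2 + s_2 = s_1(1+s_0) = s_0$. So the condition reads $4 s_0 x + s_0^2 + 2 s_0 > 1$, i.e.\ $x > \frac{1-s_0^2-2s_0}{4 s_0} = s_3$. Conversely, when $x\le s_3$ the chain above is a valid placement: $X$ lies far below the outer frame disk, it is smaller than $s_0$ and therefore disjoint from the adjacent $s_1$‑disk, and one checks that no disk of the chain protrudes past $f_0-1$, so span $2(m+1)$ is attained. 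The one subtlety that needs care is the bookkeeping — which disk's leftmost point is binding, and the forced double contact of $X$ with both $s_0$‑disks; after that the algebra collapses precisely because $s_1$ and $s_2$ were chosen so that $s_0 s_1 + s_1 s_2 + s_2 = s_0$.
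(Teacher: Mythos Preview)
Your proof is correct and follows the same approach as the paper's: compute the horizontal extent from the outer-frame footpoint to the far edge of the outermost $s_0$-disk and compare it to~$1$. The paper's version is shorter because it uses the direct contact between the outer frame disk and the adjacent inner frame disk (so the relevant sequence is just $1,s_0,x,s_0$ with width $2s_0+4s_0x+s_0^2$) rather than routing through $s_2,s_1$ and then telescoping $2(s_2+s_1s_2+s_0s_1)$ back to~$2s_0$; it also dispatches the case $x\le s_0/2$ directly via Lemma~\ref{lemma:gap} rather than invoking Lemma~\ref{lemma:minimal}, since the lemma as stated is for an arbitrary disk~$X$, not only end/partition disks.
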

\begin{proof}
  If $x \leq \nicefrac{s_0}{2} < s_3$, the statement follows from
  Lemma~\ref{lemma:gap}, so assume $x > \nicefrac{s_0}{2}$. Then the
  total width of the sequence~$1, s_0, x, s_0$ is $2s_0 + 4s_0x +
  s_0^{2}$. The span increases if and only if this is larger than one,
  proving the lemma.
\end{proof}

\subparagraph{A 3-partition implies small span.}

Assume that $\mathcal{A}$ can be partitioned into $m$ groups~$A_i$
such that $\sum_{a\in A_i}a = B$.  Consider a group~$A_i = (a_{i1},
a_{i2}, a_{i3})$ and let $X$, $Y$, and $Z$ be the partition disks
corresponding to $a_{i1}, a_{i2}, a_{i3}$.  Then we have
\[
x + y + z = \frac{17}{99} \Big(\frac{3\cdot (a_{i1} + a_{i2} +
  a_{i3})}{100\cdot B} + 3 \cdot \frac{99}{100}\Big)
= \frac{17}{33}.
\]
By Lemma~\ref{lemma:partition} this implies that $X$, $Y$, and~$Z$ can
be placed in a common gap in the pattern of
Figure~\ref{figure:construction} without increasing the total span.
Since there are~$m$ gaps, we can place all partition disks into the
$m$~gaps.  Finally, by Lemma~\ref{lemma:end}, we can place the two end
disks inside the left end and the right end.

\subparagraph{Small span implies a 3-partition.}

We assume now that a placement of the disks~$\mathcal{D}$ with
span $2(m+1)$ exists.  By Lemma~\ref{lemma:pattern}, the frame and
filler disks must be placed in the pattern of
Figure~\ref{figure:construction}. It remains to discuss the possible
locations of the end disks and the partition disks.
We need a number of observations about a placement of span~$2(m+1)$:
\begin{enumerate}[label=(\alph*),leftmargin=2em]
\item\label{enum:leftright1} The left end and right end can contain at
  most one end disk or partition disk, and only between the two inner
  frame disks or between the outer frame disk and the small filler
  disk, see top of Table~\ref{table:Disks}.
  
  \begin{table}[t]
      \caption{Impossible placements of end/partition disks\ldots}
      \label{table:Disks}
      \centering
      \begin{tabular}{l|l@{\hspace{3mm}}l}
        \multicolumn{3}{l}{\ldots in the right end} \\
        sequence & width \\
        \hline
        $1\  s_0\  s_0\  s_4$ &
        $2s_0+2s_0^{2} + 2s_0s_4 + s_4^2$ & $> 1.0201$\\
        $1\  s_1\  s_4\  s_0\  s_0$ &
        $2s_1+2s_{1}s_4 + 2s_0s_4 + 3s_{0}^2$ & $> 1.0209$\\
        $1\  s_2\  s_4\  s_1\  s_0\  s_0$ &
        $2s_2+2s_{2}s_4 + 2s_1s_4 + 2s_{1}s_{0} + 3s_{0}^2$ & $> 1.0411$\\
        $1\  s_0\  s_4\  s_4\  s_0$ &
        $2s_0+4s_0s_4 + 2s_4^{2} + s_{0}^2$ & $>1.0536$\\
        $1\  s_4\  s_4\  s_2\  s_1\  s_0\  s_0$ &
        $2s_4+2s_4^{2} + 2s_{2}s_4 + 2s_1s_{2} + 2s_{1}s_{0} + 3s_{0}^2$ &
        $> 1.0584$\\
        $1\  s_4\  s_2\  s_1\  s_0\  s_4\  s_0$ &
        $2s_4+ 2s_{2}s_4 + 2s_1s_{2} + 2s_{1}s_{0} + 4s_{0}s_4 + s_{0}^2$ &
        $> 1.0080$ \\
        \hline 
        \multicolumn{3}{l}{ } \\
        
        \multicolumn{3}{l}{\ldots in a gap} \\
        sequence & total footpoint distance \\ 
        \hline
        $1\  s_1\  s_4\  s_0\  s_0\  s_0\  s_0\  1$ &
        $2s_1+2s_1s_4 + 2s_{0}s_4 + 6s_{0}^2 + 2s_{0}$ & $> 2.0076$\\
        $1\  s_2\  s_4\  s_1\  s_0\  s_0\  s_0\  s_0\  1$ &
        $2s_2+2s_2s_4 + 2s_4s_{1} + 2s_1s_0 + 6s_{0}^2 + 2s_{0}$ & $> 2.0278$\\
        $1\  s_0\  s_4\  s_4\  s_0\  s_0\  s_0\  1$ &
        $4s_0+4s_0s_4 + 2s_4^{2} + 4s_{0}^2$ & $> 2.0403$\\
        $1\  s_4\  s_4\  s_2\  s_1\  s_0\  s_0\  s_0\  s_0\  1$ &
        $2s_4+2s_4^{2} + 2s_4s_{2} + 2s_{2}s_{1} + 2s_1s_0 +
        6s_{0}^2 + 2s_{0}$ & $> 2.0451$\\
        $1\  s_4\  s_2\  s_1\  s_0\  s_4\  s_0\  s_4\  s_0\  s_0\  1$ &
        $2s_4+ 2s_4s_{2} + 2s_{2}s_{1} + 2s_1s_0 + 8s_{0}s_4 +
        2s_{0}^2 + 2s_{0}$ & $> 2.0030$\\
        $1\  s_4\  s_2\  s_1\  s_0\  s_4\  s_0\  s_0\  s_0\  s_1\  s_2\  s_4\  1$
        & $4s_4+ 4s_4s_{2} + 4s_{2}s_{1} + 4s_1s_0 + 4s_{0}s_4 + 4s_{0}^2$
        & $> 2.0078$\\
      \end{tabular}
      \vspace{.5em}
  \end{table}
  
\item\label{enum:gapatmost3} A gap can contain at most three partition
  disks or end disks. If a gap contains three such disks, each has to
  appear between two inner frame disks, see bottom of 
  Table~\ref{table:Disks}.
\item\label{enum:gap3} Since there are $3m+2$ end and partition disks,
  \ref{enum:leftright1} and~\ref{enum:gapatmost3} imply that each gap
  contains three such disks, while the left end and right end each
  contain one.
\item\label{enuum:sizes3} By \ref{enum:leftright1} and
  Lemma~\ref{lemma:end}, the left end and the right end can contain
  only disks of size at most~$s_3$. We can assume that these are the
  two end disks (otherwise, swap them with an end disk).
\item\label{enum:partyes} Consider a gap. It contains exactly three
  partition disks $X$, $Y$, and $Z$. By Lemma~\ref{lemma:partition},
  we have $x + y + z \leq \nicefrac{17}{33}$.  Let $a, b, c$ be the
  elements of~$\mathcal{A}$ corresponding to $X$, $Y$, and~$Z$. Then
  we have
  \[
  x + y + z = \frac{17}{99} \Big(\frac{3\cdot (a + b + c)}{100\cdot B} +
  3 \cdot \frac{99}{100}\Big) \leq \frac{17}{33},
  \]
  which implies $a + b + c \leq B$.  It follows that we have
  partitioned the elements of~$\mathcal{A}$ into~$m$ groups~$A_{1},
  A_2, \dots, A_m$ with $\sum_{a \in A_i}a \leq B$.  Since $\sum_{a
    \in \mathcal{A}}a = mB$, we must have $\sum_{a \in A_i}a = B$ for
  each~$i$, so $(\mathcal{A}, B)$ is a Yes-instance of
  \threepartition.
\end{enumerate}
This concludes the proof of Theorem~\ref{theorem:nphard}, noting that
by Lemma~\ref{lemma:minimal} all disks have size at least~$s_{4} >
\nicefrac 16$.

\section{A $\nicefrac 43$-Approximation}
\label{section:approximation}

In this section, we give a \emph{greedy algorithm} and prove that it
computes a $\nicefrac 43$-approximation to the problem.

Our algorithm starts by sorting the disks $D_1, D_2, \dots, D_n$ by
decreasing size, such that $d_1 \geq d_2 \geq \dots \geq d_n$. It then
considers the disks one by one, in this order, maintaining a placement
of the disks considered so far.  Each disk~$D$ is placed as follows:
\begin{enumerate}
\item If there is a gap between two consecutive disks~$A$ and~$B$ in
  the current placement that is large enough to contain~$D$, then we
  place~$D$ in this gap, touching the \emph{smaller} one of the two
  disks~$A$ and~$B$.
\item Otherwise, let $A$ be the leftmost disk in the current placement
  (that is, the disk with the leftmost footpoint---this is not
  necessarily the disk defining the left end of the current span), and
  let $Z$ be the rightmost disk.  Since $d \leq a$, we can place~$D$
  so that it touches~$A$ from the left (candidate placement~$D_{A}$),
  and since $d \leq z$, we can place~$D$ so that it touches~$Z$ from
  the right (candidate placement~$D_{Z}$).
\item If one of the candidate placements~$D_A$ or~$D_Z$ does not
  increase the span, we place~$D$ in this way.
\item Otherwise, we place $D$ at $D_A$ if $a > z$ and at~$D_Z$ otherwise.
\end{enumerate}
The algorithm can be implemented to run in time~$O(n \log n)$ as
follows: We maintain a priority queue that stores, for each pair of
consecutive disks, the size of the largest disk that will fit between
them.  Since we are placing disks in order of decreasing size, a newly
placed disk can only touch its two neighbors, and so it will fit into
the gap if and only if its size is at most the stored gap size.  

For the analysis of the approximation factor, we can ignore all disks
that are placed after the last disk that increased the span.  Removing
these disks from the set does not change the solution computed by the
algorithm, and can only decrease the lower bound.  We will therefore
assume in the following that the final disk~$D_{n}$ is placed using
the last rule.  We also assume that $d_n = 1$.

Next, let's call a disk~$D$ \emph{large} if $d \geq 2$, and
\emph{small} otherwise.  We have the following:
\begin{lemma}
  \label{lemma:touch}
  If two small disks are consecutive in the final placement computed
  by the algorithm, then they touch each other.
\end{lemma}
\begin{proof}
  Assume, for a contradiction, that $D$ is the \emph{first} small
  disk whose placement causes two small disks to be consecutive but
  non-touching.

  If $D$ was placed by the third or fourth rule (at the left or right
  end of the sequence), it is touching its only neighbor.
  Therefore,~$D$ must have been placed in a gap between two disks~$A$
  and~$B$.  If both~$A$ and~$B$ are small, they must be touching
  (since~$D$ is the first small disk that will not touch a neighboring
  small disk).  But, by Lemma~\ref{lemma:gap}, this means that the gap
  between~$A$ and~$B$ is too small to contain a disk of size~$d \geq
  1$.  It follows that at most one of~$A$ and~$B$ is small, say~$B$.
  But then the algorithm will place~$D$ such that it touches~$B$, a
  contradiction.
\end{proof}

To prove that our algorithm achieves approximation factor
$\nicefrac{4}{3}$, 
we will need five inequalities, which we state and prove first.

\begin{lemma}
  \label{lem:inequalities}
  The following five inequalities hold.
  \begin{alignat}{2}
    \label{eq:ineq1}
    x + y - xy &\leq 1 &\qquad& \text{for } 0 < x, y \leq 1\\
    \label{eq:ineq2}
    x + y - xy &\geq \frac{3}{4} && 
    \text{for } 0 < x, y \leq 1 \text{ and } x + y \geq 1 \\
    \label{eq:ineq3}
    x + y + xy &\geq \frac{7}{9} &&
    \text{for } \frac 13 \leq x, y \leq 1 \\
    \label{eq:ineq4}
    \frac 1x + \frac 1y + 2\frac{z-1}{xy} &\geq \frac{7}{9} &&
    \text{for } 1 \leq x, y, z \leq 3
    \text{ and } (x-z)y\leq x+z \\
    \label{eq:ineq5}
    \frac{3x + y - 1}{2x + xy + 1} &\geq \frac 34 &&
    \text{for } 1 \leq x \leq \nicefrac 32
    \text{ and } 1 \leq y \leq 4.
  \end{alignat}
\end{lemma}
\begin{proof}
  We prove the inequalities separately.
  \begin{enumerate}
  \item[(\ref{eq:ineq1})] Consider the function $f_{1}(x, y) = x + y -
    xy$.  The partial derivatives of $f_{1}$ are positive for $x, y <
    1$, so $f_{1}(x, y) \leq f_{1}(1, 1) = 1$.
  \item[(\ref{eq:ineq2})] $x + y \geq 1$ implies $f_{1}(x, y) \geq
    f_{1}(x, 1-x) = x^{2} - x + 1 = (x - \nicefrac 12)^{2} + \nicefrac
    34 \geq \nicefrac 34$.
  \item[(\ref{eq:ineq3})] Consider the function $f_{2}(x, y) = x +
    y + xy$.  Both partial derivatives of~$f_{2}$ are positive for
    positive~$x, y$, so $x, y \geq \nicefrac 13$ implies $f_{2}(x, y)
    \geq f_{2}(\nicefrac 13, \nicefrac 13) = \nicefrac 79$.
  \item[(\ref{eq:ineq4})] Consider the function $f(x, y, z) =
    \nicefrac 1x + \nicefrac 1y + \nicefrac{2(z-1)}{xy}$. The partial derivatives
    for~$x$ and~$y$ are negative for $x,y,z \geq 1$, the partial
    derivative for~$z$ is positive for~$x, y > 0$.  This implies that
    the claim holds for~$y \leq \nicefrac 94$, since then $f(x, y, z)
    \geq f(3, \nicefrac 94, 1) = \nicefrac 79$.

    The constraint $(x-z)y\leq x+z$ implies $z \geq
    (1-\nicefrac{2}{y+1})x$, and so $f(x, y, z) \geq g(x, y)$, where we
    set
    \[
    g(x,y) = f\Big(x, y, \big(1-\frac{2}{y+1}\big)x\Big)
    = \frac 1x + \frac 3y - \frac 2{xy} - \frac 4{y(y+1)}.
    \]
    Since $\frac{\partial}{\partial x} g(x, y) = \nicefrac{2-y}{x^{2}y} <
    0$ for $y > \nicefrac 94$, we have
    \[
    g(x, y) \geq g(3, y) = \frac 13 + \frac 4{y+1} - \frac 5{3y}.
    \]
    We have $\frac{\partial}{\partial y}g(3, y) =
    -\frac{7y^{2}-10y-5}{3y^{2}(y+1)^{2}} < 0$ for $y > \nicefrac 94$,
    and so $g(3, y) \geq g(3, 3) = f(3, 3, \nicefrac 32) = \nicefrac
    79$.
    
  \item[(\ref{eq:ineq5})] Consider the function~$h(x, y) = 3x + 2y
    - \nicefrac{3xy}{2}$ over the domain~$1 \leq x \leq \nicefrac 32$ and
    $1 \leq y \leq 4$.  For fixed~$y$, the function~$h(x, y)$ is
    linear in~$x$, so $h(x, y)\geq \min\big\{h(1,y), \,h(\nicefrac
    32,y)\big\}$.  We have $h(1,y) = 3 + \nicefrac{y}{2} \geq
    \nicefrac 72$ and $h(\nicefrac 32, y) = \nicefrac 92 - \nicefrac
    y4 \geq \nicefrac 72$, implying $h(x, y) \geq \nicefrac 72$.

    It follows that $\nicefrac{7}{4} \leq \nicefrac{h(x,y)}{2} =
    \nicefrac{3x}{2} + y - \nicefrac{3xy}{4}$, so $\nicefrac{3x}{2} + y \geq
    \nicefrac{3xy}{4} + \nicefrac{7}{4}$, and therefore
    \[
    3x + y - 1 = \frac 32 x + \big(\frac 32 x + y\big) - 1
    \geq \frac{3}{2}x + \frac{3}{4}xy + \frac{3}{4}
    = \frac{3}{4}\big(2x + xy + 1\big).\qedhere
    \]
  \end{enumerate}
\end{proof}

We now associate with each disk a \emph{support interval}.  The
support interval of a disk~$A$ is the interval $[\fA - 2a + 1, \fA +
  2a - 1]$. Since $0 \leq (a-1)^{2} = a^{2} -2a + 1$, we have $2a-1
\leq a^2$, and so the support interval of a disk lies within the
disk's span, see Figure~\ref{figure:support}.
\begin{figure}[t]
  \centerline{\includegraphics[width=.5\linewidth]{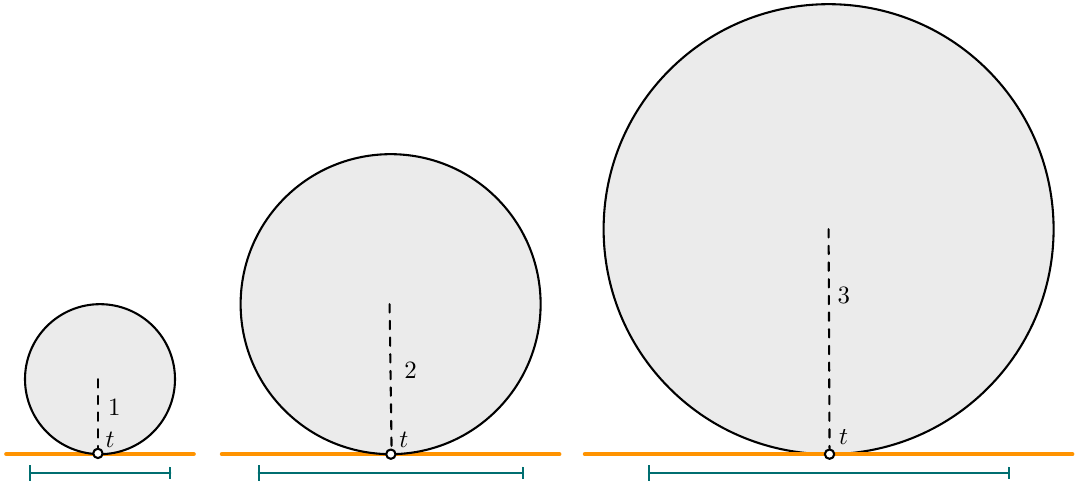}}
  \caption{Support of three disks of radius $1$, $2$ and $3$
    respectively.}
  \label{figure:support}
\end{figure}

\begin{lemma}
  \label{lemma:support}
  In any feasible placement of disks of size at least one, the open
  support intervals of the disks are disjoint.
\end{lemma}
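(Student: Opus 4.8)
The plan is to reduce the statement to a pairwise inequality and then to a trivial algebraic fact. Fix two disks $A$ and $B$ of the placement with $\fA$ to the left of $\fB$; the support intervals of all other disks are irrelevant to this pair. What we want is that the right endpoint of $A$'s support interval does not exceed the left endpoint of $B$'s, i.e. $\fA + 2a - 1 \leq \fB - 2b + 1$, which is equivalent to $\fA\fB \geq 2(a+b) - 2 = (2a-1)+(2b-1)$. So it suffices to lower-bound $\fA\fB$ using only that $A$ and $B$ are disjoint.

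The first step is to observe that disjoint disks satisfy $\fA\fB \geq 2ab$. This follows from Lemma~\ref{lemma:footpoint}: when $A$ and $B$ touch their footpoint distance is exactly $2ab$, and sliding $B$ horizontally away from $A$ keeps the two disks disjoint while only increasing $\fA\fB$, so $2ab$ is the minimum of $\fA\fB$ over all disjoint placements. (Equivalently, one can note that disjointness forces the distance between the two centers to be at least $a^2+b^2$, and since the centers lie at heights $a^2$ and $b^2$ above their footpoints, Pythagoras gives $(\fA\fB)^2 \geq (a^2+b^2)^2 - (a^2-b^2)^2 = 4a^2b^2$.)

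The second step is the algebra: from $a \geq 1$ and $b \geq 1$ we get $(a-1)(b-1) \geq 0$, hence $ab \geq a + b - 1$, and therefore $\fA\fB \geq 2ab \geq 2a + 2b - 2 = (2a-1)+(2b-1)$. This is precisely $\fA + 2a - 1 \leq \fB - 2b + 1$, so the open support interval of $A$ lies entirely to the left of the open support interval of $B$, and in particular the two are disjoint. Since this holds for every pair, all open support intervals are pairwise disjoint.

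I do not expect a genuine obstacle here; the only point requiring a little care is the word \emph{open} in the statement. Equality $\fA\fB = 2ab = 2a+2b-2$ can indeed occur — exactly when $a=1$ or $b=1$ and $A$, $B$ touch — in which case the two support intervals share a single common endpoint, but their interiors remain disjoint, which is all that is claimed.
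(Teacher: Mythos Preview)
Your proof is correct and follows essentially the same strategy as the paper: reduce to the inequality $2ab \geq 2a + 2b - 2$ for $a,b \geq 1$. The paper phrases this as $f(a,b) = (a+b-1)/ab \leq 1$ and argues via the signs of the partial derivatives, whereas your one-line factorisation $(a-1)(b-1) \geq 0$ is more direct. You are also slightly more careful than the paper in that you treat an arbitrary pair of disks and explicitly note $\fA\fB \geq 2ab$ for non-touching disks, while the paper only checks consecutive touching pairs and leaves the (easy) remaining cases implicit; your remark on the equality case and the word \emph{open} is likewise a nice touch the paper omits.
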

\begin{proof}
  Consider two consecutive disks of size~$a$ and~$b$. Their footpoints
  are at distance at least~$2ab$.  The two support intervals cover $2a
  - 1 + 2b - 1$ of  this distance.  By Ineq.~(\ref{eq:ineq1}), we have
  $\nicefrac{2a + 2b  - 2}{2ab} = \nicefrac  1b + \nicefrac 1a -
  \nicefrac 1{ab} \leq 1$, and so the support intervals do not overlap.
\end{proof}

Lemma~\ref{lemma:support} implies that the total length of the support
intervals is a lower bound for the span of a family of disks. We will
show that our greedy algorithm computes a solution where 
the support intervals cover at least~$\nicefrac 34$ of the span,
implying approximation factor~$\nicefrac 43$.

Consider a pair of two consecutive disks~$A$ and~$B$ placed by the
algorithm, and let $G$ be the (imaginary) largest disk that can be
placed in the gap between~$A$ and~$B$.  Since $D_n$ was not placed in
this gap, we have~$g < 1$.  By Lemma~\ref{lemma:footpoint}, we
have~$\fA\fB = \fA\fG + \fG\fB = 2ag + 2gb = 2g(a+b)$.

Consider first the case where~$A$ and~$B$ touch.
Lemma~\ref{lemma:gap} gives $\nicefrac 1g = \nicefrac 1a + \nicefrac 1b$.
The support intervals cover $2a + 2b - 2$ of the footpoint
distance~$2ab$, so the ratio is $\nicefrac 1a + \nicefrac 1b - \nicefrac{1}{ab} \geq \nicefrac 34$
by Ineq.~(\ref{eq:ineq2}).

Now suppose that $A$ and $B$ do not touch.  By
Lemma~\ref{lemma:touch}, this means at least one of the disks is
large, say~$A$, that is~$a \geq 2$. The footpoint distance~$\fA\fB$ is
$2g(a+b)\leq 2(a+b)$, and the support intervals cover $2a + 2b - 2$ of
this distance, so the ratio~is
\[
\frac{2a + 2b - 2}{2g(a+b)} \geq
\frac{a + b - 1}{a+b} = 1 - \frac{1}{a+b}.
\]
If~$a \geq 3$ or~$b \geq 2$, we already have $1 - \nicefrac{1}{a+b} \geq
\nicefrac 34$, and this bound is good enough.

It remains to consider the situation when $2 \leq a < 3$ and $1 \leq b
\leq 2$.  Breaking symmetry, we assume without loss of generality that
$B$ is to the right of~$A$. We denote the first disk to the right
of~$A$ that is touching~$A$ as~$D$.  By the nature of our algorithm,
when~$B$ was placed, it was placed inside the space between~$A$
and~$D$ (possibly, other disks were already present in this space at
that time).  Since~$B$ does not touch~$A$, the disk~$D$ must be
smaller than~$A$, that is $1 \leq d \leq a < 3$.

We analyze the entire interval~$[\fA, \fD]$ as a whole.  Since $A$
and~$D$ touch, the length of this interval is~$2ad$.  In between~$A$
and~$D$, some $k \geq 1$ disks have been placed, with~$B$ being the
leftmost of these.

We first consider the case $k \geq 2$. The total length of the support
intervals in the interval~$\fA\fD$ is at least $2a - 1 + 2d - 1 + 2k
\geq 2(a + d + 1)$.  The distance~$\fA\fD$ is~$2ad$, and by
Ineq.~(\ref{eq:ineq3})
\[
\frac{2(a+d+1)}{2ad} = \frac 1a + \frac 1d + \frac 1{ad}
\geq \frac{7}{9} > \frac{3}{4}.
\]

In the second case, $B$~is the only disk between~$A$ and~$D$.  This
means that~$B$ touches~$D$.  The total support interval length in the
interval~$\fA\fD$ is
\[
2a-1 + 4b - 2 + 2d - 1 = 2a + 4b +2d - 4.
\]  
Let $G$ be the largest disk that fits in the gap between~$A$
and~$B$.  Its size is determined by the equality $2ag + 2gb + 2bd =
2ad$, so $g = \nicefrac{(a-b)d}{a+b}$.  Since $D_n$ was not placed in this
gap, we have $g < 1$, and so $(a-b)d < a+b$.  Then
Ineq.~(\ref{eq:ineq4}) implies
\[
\frac{2a + 4b +2d - 4}{2ad} =
\frac 1a + \frac 1d + \frac{2(b-1)}{ad} \geq \frac 79 > \frac 34.
\]

To complete the proof, we need to argue about the part of the span
that does not lie between two footpoints, in other words, the two
intervals between the left wall (defined by the leftmost point on any
disk) and the leftmost footpoint, and between the rightmost footpoint
and the right wall.  Recall that we assumed that placing~$D_n$
increased the total span.  This implies that~$D_n$ was placed using
the algorithm's last rule and therefore touches one of the two walls,
let's say the right wall.  Let~$A$ and~$B$ be the two leftmost disks
(in footpoint order), and let~$Y$ and~$Z$ be the two rightmost disks
(in footpoint order).  By assumption, $Z = D_n$ and so~$z = 1$.  Since
$D_n$ was placed using the last rule, we have $y \geq a$, and $Z$
touches~$Y$. Let us call~$G$ the (imaginary) largest disk that would
fit into the space between the left wall and~$A$.  Since $D_n$ was not
placed in this position, we have~$g < 1$.  Note that the left wall is
at coordinate $\fG - g^2$, and the right wall at coordinate~$\fZ + 1$.
We now distinguish two cases.

We first consider the case where~$a \geq \nicefrac 32$.  We then
analyze the two intervals~$[\fG - g^2, \fA]$ and $[\fY, \fZ + 1]$
together.  Their total length is $g^2 + 2ga + 2y + 1 < 2y + 2a + 2$,
and the support intervals of $A$, $Y$, and~$Z$ cover $2a-1 + 2y-1 + 2
= 2y + 2a$ of this.  The ratio is
\[
\frac{2y + 2a}{2y + 2a + 2} = 1 - \frac{1}{y + a + 1} \geq 1 -
\frac{1}{4} = \frac{3}{4} \qquad
\text{since} \quad y \geq a \geq \nicefrac 32.
\]

In the second case we have~$a < \nicefrac 32$.  Then $B$ must be
touching~$A$.  This is true if $b \geq a$, because then $A$ was placed
later than~$B$ using the third rule.  When $b < a$, then it follows
from Lemma~\ref{lemma:touch}.  The distance between~$\fG - g^{2}$ and
$\fB$ is then~$g^{2} + 2ag + 2ab \leq 2ab + 2a + 1 \leq 3b +
4$. Since~$B$ fits inside the span, we must have $b^{2} \leq 3b + 4$,
which solves to~$-1 \leq b \leq 4$.

We now analyze the intervals~$[\fG - g^2, \fB]$ and $[\fY, \fZ + 1]$
together.  Their total length is
\[
g^2 + 2ga + 2ab + 2y + 1 < 2y + 2a + 2ab + 2,
\]
while the support intervals of $A$, $B$, $Y$, and~$Z$ cover
\[
4a - 2 + 2b - 1 + 2y - 1 + 2 = 2y + 4a + 2b -2.
\]
Since $y \geq a$, we can lower-bound the ratio using Ineq.~(\ref{eq:ineq5})
\[
\frac{2y + 4a + 2b - 2}{2y + 2a + 2ab + 2}
\geq \frac{6a + 2b - 2}{4a + 2ab + 2}
= \frac{3a + b - 1}{2a + ab + 1} \geq \frac 34.
\]
Note that in this second case we have used the interval~$[\fA, \fB]$
to help bound the coverage of the two end intervals.  This could be a
problem if the same interval was also needed to help bound a larger
interval of the form~$[\fA, \fC]$, where $A$ and $C$ touch and~$B$ was
inserted into this interval later.  But note that we needed to
analyze~$[\fA,\fC]$ as a whole only if~$c < 3$.  Since $a < \nicefrac
32$, no disk of size one would then fit into the gap between~$A$
and~$C$, so this situtation cannot occur.

This completes the proof of the following theorem.
\begin{theorem}
  \label{theorem:greedy}
  The greedy algorithm computes a $\nicefrac 43$-approximation in
  time~$O(n \log n)$.
\end{theorem}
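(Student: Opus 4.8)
The plan is to establish the theorem's two assertions separately. The running time is immediate: sorting the $n$ disks by decreasing size costs $O(n\log n)$, and thereafter each disk is placed in $O(\log n)$ time by keeping the currently available gaps (and the two end intervals) in a priority queue whose key is the size of the largest disk that fits, obtained from Lemma~\ref{lemma:gap} and Lemma~\ref{lemma:wallgap}; placing a disk modifies only a constant number of gaps.

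For the approximation guarantee I would invoke Lemma~\ref{lemma:support}: in \emph{any} feasible placement the open support intervals of the input disks are pairwise disjoint and each lies inside its disk's span, so their total length $L$ is a lower bound on the optimum span. Hence it suffices to show that in the placement produced by the greedy algorithm the support intervals cover at least $\nicefrac 34$ of the computed span, i.e.\ that this span is at most $\nicefrac 43 L$. I normalize so that the smallest disk has size $s_n=1$, and assume without loss of generality that it was inserted by the last rule (otherwise it does not affect the span and can be discarded). I then cut the span into the intervals $[\fA,\fB]$ delimited by consecutive footpoints, together with the two \emph{end intervals} between the left wall and the leftmost footpoint and between the rightmost footpoint and the right wall, and bound the support coverage on each piece separately.

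The core is a case analysis on a consecutive pair $A,B$. Let $G$ be the largest disk that would fit in their gap; then $g<1$ since $S_n$ was not placed there, and $\fA\fB = 2g(a+b)$ by Lemma~\ref{lemma:footpoint}. If $A$ and $B$ touch, Lemma~\ref{lemma:gap} gives $g = ab/(a+b)$, the support intervals cover $2a+2b-2$ of $\fA\fB=2ab$, and the ratio $1/a + 1/b - 1/ab$ is at least $\nicefrac 34$ whenever $1/a + 1/b > 1$. If they do not touch, Lemma~\ref{lemma:touch} forces one of them, say $A$, to be large ($a\ge2$), and the ratio is at least $1 - 1/(a+b)$, which is $\ge\nicefrac 34$ already when $a\ge3$. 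The remaining regime $2\le a<3$, $1\le b\le a$, is where things get delicate and is, I expect, the main obstacle: one cannot charge $[\fA,\fB]$ in isolation, so I enlarge to $[\fA,\fD]$, where $D$ is the first disk to the right of $A$ touching $A$ (so $1\le d\le a<3$, and $B$ together with any other intervening disks was inserted by the algorithm into the space between $A$ and $D$). If $k\ge2$ disks lie between $A$ and $D$, then two unit disks fit there, forcing $d>2$, and the coverage ratio $(a+d+1)/(ad)$ is $\ge\nicefrac 79>\nicefrac 34$; if $k=1$, then $B$ touches $D$, $g = (a-b)d/(a+b)$ with $(a-b)d < a+b$, and minimizing $(a+2b+d-2)/(ad)$ over $2\le a\le3$, $1\le d\le a$, $1\le b\le2$, $(a-b)d<a+b$ again yields $\nicefrac 79$.

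Finally I treat the two end intervals together. With $A,B$ the leftmost two disks in footpoint order, $Y,Z$ the rightmost two, $Z=S_n$ (so $z=1$), $Y$ touching $Z$, $y\ge a$, and $G$ the largest disk that would fit left of $A$ (so $g<1$): if $a\ge\nicefrac 32$, combining $[\fG - g^2,\fA]$ with $[\fY,\fZ+1]$ gives ratio $1 - 1/(y+a+1) \ge \nicefrac 34$; if $a<\nicefrac 32$, then $B$ touches $A$ (by the third rule when $b\ge a$, by Lemma~\ref{lemma:touch} otherwise), and combining $[\fG - g^2,\fB]$ with $[\fY,\fZ+1]$ reduces the claim to $\frac{3}{2}a + b - \frac{3}{4}ab \ge \frac{7}{4}$ over $1\le a\le\nicefrac 32$, $1\le b\le4$, which holds because the left-hand side is linear in $a$ for fixed $b$ and is at least $\nicefrac 74$ at both $a=1$ and $a=\nicefrac 32$. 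A last point to check is that the interval $[\fA,\fB]$ borrowed here to help cover an end interval is never also needed to help cover a larger interval $[\fA,\fC]$ with $A,C$ touching; but $a<\nicefrac 32$ means no unit disk fits in the gap between $A$ and $C$, so $[\fA,\fC]$ never has to be analyzed as a whole, and there is no double counting. The only step I would try to improve is the $k=1$ minimization, replacing the brute-force optimization by a direct argument that the minimum of $(a+2b+d-2)/(ad)$ is attained at a vertex of the polyhedral feasible region.
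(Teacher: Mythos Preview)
Your proposal is correct and follows essentially the same approach as the paper's own proof: the same normalization, the same support-interval lower bound via Lemma~\ref{lemma:support}, the identical case split on consecutive pairs (touching; non-touching with $a\ge3$; $2\le a<3$ handled by enlarging to $[\fA,\fD]$ with the $k\ge2$ and $k=1$ subcases giving $\nicefrac79$), the same two-case treatment of the end intervals, and the same no-double-counting check. The only detail you leave implicit is why $b\le4$ in the $a<\nicefrac32$ end case (it comes from the fact that $B$ fits inside the span, so $b^2\le 3b+4$); otherwise your outline matches the paper step for step.
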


\section{Conclusions}

Our best approximation algorithm achieves an approximation factor
of~$\nicefrac 43$.  We were unable to find a polynomial time approximation
scheme, so it would be natural to try to prove that the problem is
\APX-hard.  This, however, seems unlikely to be true, for the same
reasons as outlined by D\"urr et al.~\cite{Duerr2017}: The ideas they
present appear to transfer to our problem, and would lead to an
$2^{O(\log^{O(1)}n)}$ algorithm with approximation
factor~$(1+\varepsilon)$.  \APX-hardness, on the other hand, would
imply that for some~$\varepsilon>0$ this approximation problem
is~\NP-hard, implying subexponential algorithms for~\NP.

\section*{Acknowledgments}

We thank Peyman Afshani and Ingo van Duin for helpful discussions
during O.C.'s visit to Madalgo in 2016.  We also thank all
participants at the Korean Workshop on Computational Geometry in
W\"urzburg~2016.

\bibliographystyle{abbrvurl}
\bibliography{coins}
\end{document}